\pgfplotsset{compat=1.18}
\newcommand{\mcL}{\mathcal{L}}
\newcommand{\mcB}{\mathcal{B}}
\newcommand{\mcI}{\mathcal{I}}
\newcommand{\mbfA}{\mathbf{A}}
\newtheorem{theorem}{theorem}[section]
\newtheorem{lemma}[theorem]{Lemma}
\newtheorem{proposition}[theorem]{Proposition}
\newtheorem{corollary}[theorem]{Corollary}
\newtheorem{definition}[theorem]{Definition}
\newtheorem{remark}[theorem]{Remark}
\newtheorem{example}[theorem]{Example}
\title{Bipartiteness in Progressive Second-Price Multi-Auction Networks with Perfect Substitute}
\begin{document}

\maketitle
\setlength{\parindent}{-1.3cm}
Jordana~Blazek~\href{https://orcid.org/0000-0003-0840-693X}{$^{1[0000-0003-0840-693X]}$}, 
and  Frederick~C~Harris,~Jr.~\href{https://orcid.org/0000-0002-0857-6931}{$^{2[0000-0002-0857-6931]}$}\\ \\
\small
\centering
University of Nevada, Reno, Nevada, USA 

\begin{abstract}
We consider a bipartite network of buyers and sellers, where the sellers run locally independent 
Progressive Second-Price (PSP) auctions, and buyers may participate in multiple auctions, 
forming a multi-auction market with perfect substitute. The paper develops a projection-based
influence framework for decentralized PSP auctions.
We formalize primary and expanded influence sets using projections on the active
bid index set and show how partial orders on bid prices govern allocation, market shifts, 
and the emergence of saturated one-hop shells. Our results highlight the robustness 
of PSP auctions in decentralized environments by introducing saturated components 
and a structured framework for phase transitions in multi-auction dynamics. 
This structure ensures deterministic coverage of the strategy space, enabling stable
and truthful embedding in the larger game. We further model intra-round dynamics
using an index $\tau_k$ to capture coordinated asynchronous seller updates coupled through
buyers' joint constraints. Together, these constructions explain how local 
interactions propagate across auctions and gives premise for coherent 
equilibria--without requiring global information or centralized control.
\end{abstract}


\section{Introduction}


The Progressive Second Price (PSP) auction, introduced by Lazar and Semret~\cite{Lazarsemret1998}, 
and later expanded upon in Semret’s dissertation~\cite{Semret2000}, presented a full 
theoretical framework for distributed resource pricing and demonstrated the linkage between 
PSP and VCG‑type efficiency results. The PSP auction is a decentralized mechanism
characterized by truthfulness, individual rationality, and social welfare maximization. 
Unlike traditional centralized auctions, PSP allows buyers and sellers to iteratively 
interact through local bidding rounds, dynamically allocating consumable resources such as 
network bandwidth and other communicative and computational resources. 
In PSP auctions, winners pay a cost determined by the externality that is imposed on others,
calculated from the distribution of allocations and the bid, This ensures truthful 
reporting of valuation 
through incentive compatibility as was shown in the foundational work of Vickrey,
Clarke, and Groves~\cite{Vickrey1961, Clarke1971, Groves1973}.
The resulting equilibria adhere to the exclusion-compensation principle, 
preventing unilateral improvement without harming another participant.

Our focus is on developing adaptive auction mechanisms, like the Progressive 
Second Price (PSP) auction, that respond to market dynamics by allowing agents to 
adjust their bids based on local information gathered from their network neighbors.
This motivates the study of influence sets, dynamic participation, and the role 
of network effects in shaping bidding behavior. In these settings, agents 
lack full market information and are affected by network dependencies.

Maillé et al.~\cite{Maille2004} build directly on Lazar and Semret’s 1999 PSP framework by 
addressing the one remaining free parameter in the model — the reserve price.
They demonstrate that while PSP guarantees convergence, efficiency, and incentive compatibility,
the seller’s reserve price can be optimized by simple 
numerical methods, allowing PSP markets to balance efficiency with revenue maximization.

These iterative updates operate as strategic interactions in a decentralized framework, 
where the PSP auction converges, perhaps astonishingly, deterministically to an 
$\varepsilon$-Nash equilibrium. This has been shown to be true on the networks of
20 years ago, when bandwidth and bandwidth allocation was perhaps a different game.
A real-world, modern network faces significant obstacles; it is a game of partial
information played in a web of interconnected decisions, dynamic participation, 
and evolving market constraints. This motivates a graph-theoretic treatment of 
information flow and motivates the introduction of the concept of market saturation.

The structure of this paper is as follows. Section~\ref{sec:background} introduces
the foundations necessary to model influence propagation in decentralized auctions. 
In Section~\ref{sec:mechanism}, we present the Progressive Second Price (PSP)
auction mechanism, outlining its bidding rules, participation logic, 
and price allocation behavior. Section~\ref{sec:inf-sets} defines and explores 
the dynamics of influence sets, 
establishing a framework for analyzing how strategy updates propagate acrossthe
market. Our approach adopts and extends these concepts through graph-based methods, 
specifically leveraging the bipartite graph to systematically represent
buyer--seller interactions. Section~\ref{sec:saturation} introduces the concept 
of saturation as the limit of influence propagation, characterizing a locally 
evolving equilibrium structure. The simulation framework and implementation
are discussed in Section~\ref{sec:simulation-framework}, and this paper's 
conclusion and future work are presented in Section~\ref{sec:conclusion}.


\section{Background and Related Work}\label{sec:background}


This paper introduces a graph-based analytical framework to examine the dynamics of 
Progressive Second Price (PSP) auctions within decentralized market structures. 
Our approach builds on foundational concepts in auction theory, network influence
propagation, and graph analysis, while situating the PSP model among several related domains.

The Progressive Second Price (PSP) auction, initially proposed by Lazar 
and Semret~\cite{Lazarsemret1998}, extends classical second-price 
mechanisms~\cite{Vickrey1961, Clarke1971, Groves1973} into decentralized contexts. 
Earlier studies such as Maille and Tuffin~\cite{Maille2007} and Semret’s 
dissertation~\cite{Semret2000} provided a full system-level model of distributed 
market control and the theoretical grounding for the PSP auction mechanism, 
analyzing network-based PSP equilibria and pricing strategies. Subsequent 
formal analyses such as Qu, Jia, and Caines~\cite{Qujia2007} presented key 
results on the Uniformly Quantized PSP (UQ‑PSP) mechanism, showing that it 
guarantees convergence to a unique limit price independent of initial conditions, 
achieves $\gamma$‑incentive compatibility, and extends naturally to network 
topologies where equilibria depend on local information exchange. Their framework 
provided the first rigorous quantized extension of the PSP model, establishing
discrete convergence proofs that later generalizations such as those 
developed in this work,
Qu, Jia, and Caines~\cite{Qujia2009} further extended these results to networked PSP 
convergence, introducing asynchronous coordination and bounded-delay convergence.
Subsequent work has investigated distributed or multi-resource variants, 
including privacy-preserving and differential frameworks in data and spectrum
markets~\cite{Brandt2008, Wang2021}, expanding PSP-like mechanisms to new computational settings.

Local coordination rules, when combined with bounded delays and limited information 
exchange, can achieve global properties similar to those in consensus
and averaging protocols. Beyond traditional equilibrium analysis, distributed 
consensus and coordination models offer insight into asynchronous bidding and update rules.
Aguilera and Toueg~\cite{Aguilera2012} and Lynch~\cite{Lynch1996} describe protocols
ensuring eventual consistency under partial information, concepts that are applicable to
asynchronous PSP updates. These works demonstrate that convergent systems operating under
bounded delay result in deterministic convergence guarantees in decentralized markets.

In decentralized markets, agents’ strategies depend on local interactions but propagate
indirectly through shared participation and local coordination. This connects PSP analysis to the 
broader literature on influence diffusion and cascading behavior, as in 
Kleinberg~\cite{kleinberg2007}, Oki et al.~\cite{Oki2018}, and 
Osvaldo and Queen~\cite{Osvaldo2017}, which examine network-driven contagion 
and adaptive decision processes. The theoretical foundation of influence sets 
also aligns with the study of sphere-of-influence graphs~\cite{Quint1994, Toussaint2014TheSO} 
and dynamic graph structures that represent iterative strategic dependencies.

Graph-based approaches are central to understanding multi-agent optimization. 
Baur~\cite{Baur2012}, Barrett~\cite{Barrett2017}, and related work on planar 
and dynamic graphs illustrate how reachability, closure, and resistance distance can
capture evolving connectivity. In the PSP context, our use of projection operators 
extends these methods by linking graph reachability to economic stability, 
enabling a deterministic interpretation of market influence propagation.

This work examines decentralized auction theory, distributed coordination,
influence propagation, and graph-theoretic modeling to provide a coherent 
analytical framework for PSP auctions. This expanded foundation motivates the 
later sections on local saturation and asynchronous market dynamics.

\section{The PSP Auction Mechanism}\label{sec:mechanism}

The Progressive Second Price (PSP) auction is a decentralized mechanism in which
buyers iteratively submit bids to sellers, and sellers update reserve prices 
based on received bids. Each auction operates locally, and coordination emerges 
through repeated interactions across the market graph. 
The mechanism rules first appears in~\cite{Lazarsemret1998}, defining the bid 
structure, auction dynamics, pricing rules, allocation strategies, 
and participation behavior. In what follows, we define the bid structure, 
auction dynamics, pricing rules, allocation strategies, and participation 
behavior that govern the PSP mechanism.
Let $\mcI = \mcB \cup \mcL$ denote the set of all agents, partitioned into 
buyers and sellers. Each seller $j \in \mcL$ manages a local auction for
a divisible resource, and each buyer $i \in \mcB$ may submit bids to a
subset of sellers. The bid profile of auction $j$ is given by the column 
vector $s^j$ with entries $s_i^j$, where $(i,j) \in \mcB \times \mcL$. A bid
\[
s_i^j=(q_i^j, p_i^j)\in S_i^j = [0,Q^j]\times[0,\infty)
\]
represents a single interaction between buyer $i$ and seller $j$, 
where $q_i^j$ is the quantity requested by the buyer and $p_i^j$ is the unit price offered.

In decentralized markets governed by distributed Progressive Second Price (PSP) auctions,
agents submit bids in the form of price-quantity pairs at discrete time steps. 
These bids are locally observable: buyers receive feedback from auctions in which 
they participate, and sellers observe aggregate demand over time. However, 
the global structure of the market--including overlapping buyer influence, 
competition externalities, and inferred network effects--must be reconstructed 
from these partial, temporally indexed signals. 

\begin{table}[h]
\centering
\renewcommand{\arraystretch}{1.2}
\begin{tabular}{@{} l | c | c @{}}
\textbf{Object} & \textbf{Single auction $j$} & \textbf{Across all auctions} \\ \hline
quantity & $Q^{j}$ & $\bigl(Q^{1},\dots,Q^{J}\bigr)$ \\[2pt]
Player $i$’s bid pair & $s_{i}^{j}=(q_{i}^{j},\ p_{i}^{j})$ & $s_{i}=(s_{i}^{1},\dots,s_{i}^{J})$ \\[2pt]
Strategy space of player $i$ & $S_{i}^{j}=[0,Q^{j}]\times[0,\infty)$ & $S_{i}=\displaystyle\prod_{j=1}^{\mcL} S_{i}^{j}$ \\[4pt]
\emph{Opposing bids w.r.t.\ player $i$} & $s_{-i}^{j}=(s_{1}^{j},\dots,s_{i-1}^{j},s_{i+1}^{j},\dots,s_{n}^{j})$ & $s_{-i}=(s_{-i}^{1},\dots,s_{-i}^{J})$ \\[4pt]
Profile in auction $j$ & $s^{j}=(s_{1}^{j},\dots,s_{n}^{j})$ & \multirow{2}{*}{$s=(s^{1},\dots,s^{J})$} \\[2pt]
Grand strategy space & $S^{j}=\displaystyle\prod_{i=1}^{n} S_{i}^{j}$ & $S=\displaystyle\prod_{j=1}^{\mcL} S^{j}$ \\
\end{tabular}
\caption{Basic sets and notation for a bundle of $J$ independent PSP auctions}
\end{table}

\subsection{Bounded Participation}

Each buyer will know the available quantity for each market in which they bid. 
Buyers act strategically by selecting sellers, adjusting bid quantities,
and choosing whether to participate based on their expected ability to 
satisfy demand. In the PSP framework buyers cannot reveal their entire valuation functions in a
single step; instead they must request allocations iteratively. To regulate this
behavior we introduce a bounded participation rule, which endogenously limits
the set of sellers a buyer engages with, and can be seen as an analogue of the opt-out behavior
given in~\cite{Blocher2021}.

Fix buyer $i$ at time $t$ and let $p^*$ denote the common marginal price
identified from opponents’ bids. For each seller $j$
let $c_j=\mathrm{cap}_j(p^*)$ be the residual quantity available to $i$ at price
$p^*$. Define the desired total quantity
\begin{equation}\label{eq:total_z}
  z_i^* = \min\bigg\{\bar q_i(t), \ \sum_{j} c_j\bigg\}.
\end{equation}

\begin{definition}[Bounded participation rule]\label{def:bounded_participation}
Buyer $i$ selects a minimal--cost subset of sellers
$\mathcal{L}_i(t)\subseteq\mathcal{L}$, ordered by nondecreasing price $p_{(n)}^j(t)$,
such that
\begin{equation}\label{eq:min-cost-sellers}
  \sum_{j\in\mathcal{L}_i(t)} c_j(t) \ge z_i^*.
\end{equation}
The buyer allocates requests sequentially to the least expensive sellers until
the desired total quantity $z_i^*$ is reached, subject to residual capacities
$c_j(t)$. For $j\notin\mathcal{L}_i(t)$, set $q_i^j=0$.
\end{definition}

This rule formalizes bounded participation at fixed $t$: each buyer interacts 
only with the fewest necessary sellers to realize $z^*$, in an attempt
to minimize the cost of participation. The resulting allocation 
targets allocations at a common marginal price $p^*(t)$ under residual 
quantity constraints.

\subsection{Residual Quantity and Allocation}

As a market with perfect but incomplete information, sellers can only gain
information about demand by observing buyer behavior, 
determined by the connectivity of the auction graph.
In each iteration, every seller 
completes one update of its local auction.

For each seller $j$, the reserve price $p_*^j(t)$ is the
price at which seller~$j$ is indifferent between selling her final
unit of resource and keeping it. Equivalently, the seller may be
viewed as submitting an internal bid $(Q^j, p_*^j(t))$ on her own
auction.
At the end of each round $t$, the reserve price is updated with
information from the set of active bids,
where $\mcB^j(t)$ is the set of buyers who win strictly positive
allocations at seller~$j$ in round~$t$, and $\epsilon>0$.

We define the clearing price at seller~$j$ to be the smallest price at which
aggregate awarded quantity meets available quantity:
\begin{equation}\label{eq:clearing_price}
  \chi^j(t) = \min\bigg\{y :\sum_{k:\,p_k^j(t)>y} a_k^j(t) \ge Q^j(t)\bigg\}.
\end{equation}
Any residual supply must therefore be allocated among bids that tie
at prices just above $\chi^j(t)$, after higher–priced bids are filled.
Let
\begin{equation}\label{eq:margins}
  \underline p^j(t) := \min\{\,p_i^j(t) : i \in \mcB^j(t)\,\},
  \qquad
  \overline p^j(t) := \max\{\,p_i^j(t) : i \not\in \mcB^j(t)\,\},
\end{equation}
be the lowest winning and highest losing bid prices at seller~$j$, and
where buyers \emph{not} in $\mcB^j(t)$ receive zero allocation at seller~$j$.
The clearing price satisfies
\[
  \overline p^j(t) < \overline p^j(t) + \epsilon \le
  \chi^j(t) \le \underline p^j(t) - \epsilon < \underline p^j(t)
\]
whenever there is at least one winning and one losing bidder at
seller~$j$. In particular, $\chi^j(t)$ lies in the open interval between
the highest losing and lowest winning bid.
At equilibrium, the reserve price $p_*^j(t)$ coincides with the clearing
price at seller~$j$, i.e., the clearing price implied by the PSP
allocation rule.

Buyers at higher prices are therefore always served in full,
whereas buyers at the threshold price may be rationed.  
At each price level $y$, the residual quantity is given by
\begin{equation}\label{eq:residual_R}
    R^j(y,t) = \Big[ Q^j(t) - \sum_{k:p_k^j(t)>y} a_k^j(t) \Big]^+.
\end{equation}

When multiple buyers tie at $p_i^j(t)=y$, the awarded allocation respects both the 
buyer’s request and the residual supply. We refer to the tie–splitting rule originated 
in the analysis of quantized PSP auctions by Qu, Jia, and Caines~\cite{Qujia2007},
\begin{equation}\label{eq:allocation-prop-matrix}
    a_i^j(s(t)) = \min\bigg\{ q_i^j(t), \frac{q_i^j(t)}{\sum_{\ell:p_\ell^j(t)=y} q_\ell^j(t)} R^j(y,t) \bigg\}.
\end{equation}

The bid quantity $q_i^j(t)$ and the allocation $a_i^j(t)$
are complementary. In fact, the buyer strategy is the first term in the
minimum, the second term being owned by the seller. 
For each buyer--seller pair $(i,j)$ at time $t$, $a_i^j(t)$ is the
\emph{awarded} amount that seller $j$ allocates to buyer $i$ once the allocation 
rule has been applied. By construction,
\begin{equation}\label{eq:awarded_matrix}
    a_i^j(t) \le q_i^j(t),
\end{equation}
with equality holding when residual supply at the
buyer’s price suffices to satisfy 
all tied requests. The mechanism therefore never awards 
more than requested and may 
award less when quantity is limited.

We remark that the reserve price $p_*^j(t)$ that lies in the margin interval determined
by the bids
\begin{equation}\label{eq:reserve_interval}
  \overline p^j(t) < p_*^j(t) < \underline p^j(t),
\end{equation}
whenever both $\overline p^j(t)$ and $\underline p^j(t)$ are defined,
and we deliberately leave the precise rule for
selecting $p_*^j(t)$ within the interval~\eqref{eq:reserve_interval}
unspecified. In particular, admissible choices include
\[
  p_*^j(t) = \chi^j(t),\qquad
  p_*^j(t) = \overline p^j(t) + \epsilon,\qquad
  p_*^j(t) = \underline p^j(t) - \epsilon,
\]
provided that reserve price updates lie within $\epsilon$ and the resulting
sequence $\{p_*^j(t)\}_t$ is nondecreasing.

\subsection{Exclusion--Compensation}

Each buyer’s payment follows a second–price externality principle, 
this is the ``social opportunity cost'' of the PSP pricing rule.
The exclusion–compensation payment to buyer~$i$ equals the 
loss imposed on other buyers at that seller when $i$ participates.
For a fixed auction $j$ we use the opposing buyers' 
piecewise--constant marginal price function
$P^j(\cdot, s_{-i}^j)$ built from $s_{-i}^j$,
\begin{equation}\label{eq:const_cost}
    c_i^j(s) = \int_0^{a_i^j(s)} P^j\big(z, s_{-i}^j\big) dz,
\end{equation}
which holds true locally at each auction, where the opposing bids are
calculated against the allocated resource to buyer $i$.
The amount of resource available at price $p^j_{(n)}$ is
$\xi^j_{n-1} - \xi^j_n \ge 0$. The local inverse price function is then
\begin{equation}\label{eq:local-inv-price}
  P^j(z,s_{-i}^j) = p^j_{(n)}\quad \text{for }z\in(\xi^j_n,\xi^j_{n-1}].
\end{equation}
For each ordered price $y$, we have that $P_i(z, s_{-i})$ is defined
for the range of $z$ corresponding to the total resource
available from all sellers at that price, i.e.,
\begin{equation}\label{eq:zrange}
  z \in \bigg(\sum_{p_{(n)}^j>y}
                (\xi^j_{n-1} - \xi^j_n),\
              \sum_{p_{(n)}^j\ge y}
                (\xi^j_{n-1} - \xi^j_n)\bigg].
\end{equation}
Define the aggregate residual quantity
\begin{equation}\label{eq:agg-residual}
    Q_i(y, s_{-i}) \ =\ \sum_{j=1}^\mcL Q_i^j(y, s_{-i}^j),
  \qquad
  P_i(z, s_{-i}) \ =\ \inf\{\ y\ge 0:\ Q_i(y, s_{-i}) \ge z\ \},
\end{equation}
where because $Q_i(y, s_{-i})$ is a right--continuous, nondecreasing step function
with finitely many jumps at $\{p_{(m)}^j\}$, the infimum is attained.

\subsection{Valuation and Utility}

Each buyer~$i$ has an elastic valuation function $\theta_i:[0,Q_i]\to[0,\infty)$ 
with strictly decreasing derivative $\theta_i'$. The valuation depends on the total 
awarded quantity across all sellers:
\begin{equation}\label{eq:buyervaluation}
    V_i(a) = \theta_i \bigg(\sum_{j=1}^J a_i^j(t)\bigg)
    = \int_0^{\sum_{j=1}^J a_i^j(t)} \theta'_i(z) dz.
\end{equation}
Given a strategy profile $s$, the utility of buyer $i$ for potential 
allocation $a$ is dependent on the cost, $c_i(s)$, where
the cost to buyer $i$ as a function of the entire strategy profile $s$.

In the dynamic setting this profile evolves with iteration $t$, 
where $c_i(s)$ may represent total participation costs, including membership fees, 
per–round overhead, and per–auction message costs.
Utility is given by
\begin{equation}\label{eq:utility}
    u_i(s) = V_i(a) - c_i(s),
\end{equation}
where $c_i(s)$ is a dynamic cost function that evolves over time with bid updates.

The buyers’ utility functions
implicitly define a potential over the allocation space, as buyers seek to
maximize their utility through strategic allocation requests.
We note that a uniform (coordinated) bid price from buyers across active sellers upholds
strategic simplicity and second-price incentives, which are rational under 
quasi-linear utilities, as shown in the original PSP framework~\cite{Lazarsemret1998}.

Following Lazar and Semret~\cite{Lazarsemret1998}, updates occur only
when the buyer’s utility improvement exceeds a small positive
threshold, ensuring asynchronous convergence under bounded delay.
In a single–auction market, buyer $i$ accepts a new bid $s_i'$ only if
$ u_i(s_i'; \ s_{-i}) - u_i(s_i;\ s_{-i}) \ >\ \varepsilon$.
In the multi–auction setting, buyer $i$ posts a vector of bids 
that share a common marginal price $p_i^*$ across all connected sellers.
The utility comparison therefore becomes an aggregate test, where,
in terms of the opposing bid vector $s_{-i}$,
any gain in utility at time $t$ depends on the current state of play.  
Information propagation across the market affects how the vector of opposing 
bids $s_{-i}$ is formed, and thus how externalities are computed. 
The realized utility 
improvement $\Delta u_i(t)$ is evaluated relative to the previous round to 
determine if a new bid exceeds the cost of participation.

The discussion of externality under multiple auctions running asynchronously and 
a formal convergence analysis of the PSP mechanism to a single, unique, global $\varepsilon$-Nash 
network equilibrium, as was given in~\cite{Semret2000}, is outside the scope of this paper.
We instead focus on the iterative application of a uniform marginal price and the 
localized pricing structure resulting from progressive bid updates on connected 
network components consisting of multiple sellers sharing multiple buyers under
an assumed bipartite structure. A formal analysis of the effects of latency on a 
PSP auction is given in~\cite{Blazek2025}.

\section{Influence Sets}\label{sec:inf-sets}                                 

We model the behavior of vertices (buyers and sellers) in this bipartite structure using 
influence sets. Each vertex’s strategy space is influenced by neighboring vertices and 
evolves over time. Influence sets restrict the strategy space of buyers and sellers 
within bounded regions, stabilizing auction dynamics and supporting predictable market 
equilibria. As rational agents, buyers and sellers do not optimize perfectly but instead
operate within acceptable thresholds of cost and utility. Influence propagation determines
the flow of information, where bidding saturation occurs once influence sets stabilize.
At saturation, there is no vertex that, upon calculating his measure of utility, suffers
a changing set of opponent bids, and all subsequent bid updates are calculated on locally
stable subgraphs of the market.

Thus, influence sets are subsets of the auction graph that represent the scope of influence a 
particular vertex (buyer or seller) has on others over a finite number of auction iterations.
These sets structure interactions into subsets of the auction graph where local equilibria
form dense regions where bid updates are stabilized.

\subsection{Primary (Direct) Influence Sets}

Following the original definition from~\cite{Blocher2021}, the \emph{primary influence set}, 
denoted $\Lambda$, for a given seller~$j$ at time~$t$, is defined set-theoretically,
\begin{equation}\label{eq:seller-direct} 
\Lambda_{\mcL}(j,t) = \bigcup_{i\in\mcB^j(t)} \mcL_i(t), 
\end{equation} 
where $\mcB^j(t)$ is the set of buyers bidding on seller~$j$, and $\mcL_i(t)$ is the 
set of sellers that buyer~$i$ bids on. Thus, $\Lambda^{(1)}_{\mcL}(j,t)$ represents all
sellers directly connected to auction~$j$ via shared buyers at time~$t$. This definition 
captures the notion that influence propagates across the auction graph through 
buyer--seller connections. The superscript $(1)$ denotes the first layer of influence 
anchored at seller~$j$ expanding through buyer-mediated connections.

To illustrate, for a buyer~$i$, the relevant bids $s_i^j$ flow \emph{from} the buyer 
to sellers. For a seller~$j$, we reverse this; bids flow \emph{into} the seller from 
buyers. The base case captures this directionality, which we get from market theory: 
buyers have positive demand, and sellers have negative demand (otherwise known as surplus). 
The direct influence set for buyer~$i$, denoted $\Lambda_{\mcB}^{(1)}(i,t)$, includes
buyers directly connected to~$i$ through shared sellers,
\begin{equation}\label{eq:buyer-direct}
    \Lambda_{\mcB}(i,t) = \bigcup_{j \in \mcL_i(t)} \mcB^j(t).
\end{equation}

\noindent
We now have the first layer of \emph{buyer-to-buyer} influence induced by common seller
participation. It serves as a foundation for constructing buyer--buyer influence 
graphs and identifying bid coordination structures within the network. This expression
gathers the buyers indirectly connected to buyer~$i$ through shared sellers, filtered 
by active bids at the given iteration. It provides a way to trace buyer--buyer 
influence mediated through seller auctions.

We extend the definition from~\cite{Blocher2021} in a theoretical and practical sense, 
defining the base case explicitly as the vertex itself,
\begin{equation}\label{eq:Lambda-base}
\Lambda^{(0)}(x,t) = \{ x\},
\end{equation}
\noindent
emphasizing that at the zeroth level, the influence set represents only the vertex 
itself. This represents a measure of ``self--influence'', such as reserve prices
(for sellers) or initial valuations (for buyers), economically aligning with the
idea that a seller starts from a reserve price reflecting their own valuation, 
while a buyer’s self-valuation corresponds to their initial maximum willingness-to-pay.

\subsection{Expanded (Indirect) Influence Sets}
For any vertex $x$ in the auction graph (buyer or seller), and for $n\ge 1$,
the primary influence set is expanded from the $(n-1)$--step influence set by 
aggregating direct neighbors at the next layer:
\begin{equation}\label{eq:Lambda-recursive}
\Lambda^{(n)}(x,t) = \bigcup_{y \in \Lambda^{(n-1)}(x,t)} \Lambda^{(1)}(y,t). 
\end{equation}

\noindent
Where we define a two--hop projection operator
\[
\Lambda^{(1)}(y,t):=
\begin{cases}
\displaystyle\bigcup_{i\in\mcB^{y}(t)}\mcL_i(t), & \quad \text{if} \quad y\ \in\ \mcL,\\
\displaystyle\bigcup_{j\in\mcL_{y}(t)}\mcB^{j}(t), & \quad \text{if} \quad y\ \in\ \mcB,
\end{cases}
\]
which always returns vertices of the \emph{same type} as $y$ after one buyer--seller alternation.
For sellers, the $n$--step influence set may be computed recursively,
\[
\Lambda^{(n)}_{\mcL}(j,t) = \bigcup_{i \in \Lambda^{(n-1)}_{\mcB}(j,t)} \mcL_i(t),
\]
where $\Lambda^{(n-1)}_{\mcB}(j,t)$ is the set of buyers reachable from seller~$j$ 
in $n-1$ steps. This returns the set of sellers that receive nonzero bids from buyers 
who are indirectly connected to auction~$j$ via shared bidding activity across 
$n$ rounds of the PSP auction. It describes how seller~$j$’s influence propagates 
through buyer behavior across seller neighborhoods.
For buyers,
\[
\Lambda^{(n)}_{\mcB}(i,t) = \bigcup_{j \in \Lambda^{(n-1)}_{\mcL}(i,t)} \mcB^j(t),
\]
where $\Lambda^{(n-1)}_{\mcL}(i,t)$ collects sellers indirectly connected to buyer~$i$.

Each new layer $\Lambda^{(n)}(x,t)$ therefore adds the direct neighbors of all vertices
in the previous layer, producing a breadth–first expansion in the auction graph.
This recursive expansion therefore builds a ``growing influence ball” centered at $x$,
where the secondary set acts as a generalized neighborhood closure or hull
around the initial primary set. At each step $n$, the influence set $\Lambda^{(n)}(x,t)$ 
forms an outer boundary surrounding the influence set $\Lambda^{(1)}(x,t)$, 
recursively aggregating direct neighborhoods around previously identified influence vertices.

\paragraph*{Pathwise Characterization.}
In graph theory, this structure parallels the $n$-hop neighborhood closure or a 
breadth-first expansion of distance-$n$ shells. We characterize $\Lambda^{(n)}(x,t)$ 
pathwise as the set of all vertices reachable from $x$ by paths alternating between 
buyers and sellers, of length up to $2n$. Formally, let $\mathcal{G} = (\mcI, E)$ 
denote the bipartite auction graph, where $\mcI$ is the set of agents 
(buyers and sellers), and an edge $(i, j) \in E$ exists if buyer~$i$ bids on seller~$j$. 
The graph alternates between buyers and sellers by construction: no two buyers or two 
sellers are directly connected. Because of the bipartiteness we have the parity rule, and consequently
\[
\Lambda^{(n)}(x,t)=
\bigl\{y\in\mcI \ \bigl|\  \operatorname{dist}_{\mathcal G}(x,y)=2n\bigr\}.
\]

From a strategic perspective, the expanded influence set $\Lambda^{(n)}(x,t)$ 
describes the scope of anticipated externalities: the agents whose actions may 
not affect $x$ directly, but may impact $x$'s incentives via shared neighbors.
These influence chains emerge in environments with incomplete information 
and approximate the region of the market that affects the \emph{expected utility gradient}
of vertex $x$. In equilibrium analysis, these indirect sets are crucial for 
understanding stability, coordination potential, and susceptibility to shock 
propagation (e.g., strategic manipulation or correlated noise).
As noted in~\cite{Blocher2021} and echoed in broader decentralized market theory 
(e.g.,~\cite{shah2011message}), indirect influence plays a key role in 
shaping convergence. While $\Lambda(x,t)$ governs observed interaction, 
$\Lambda^{(n)}(x,t)$ governs inferred or mediated interdependence--and together, 
they define the full strategic visibility of a vertex.

\subsection{Projection Domains and Influence Operations}

The influence set framework captures cascading dependencies and forms the foundation
for our graph-theoretic analysis. To analyze the propagation of influence
in the auction network, we construct influence sets using a sequence of 
projection operations on the underlying bid graph. Each active bid is indexed by 
a pair $(i, j) \in \mcB \times \mcL$, where buyer~$i$ submits a bid to seller~$j$. 
These interactions collectively form the strategy space $S(t)$, which consists of 
the full collection of price-quantity bids $s_i^j$. We extract the active subgame
by identifying $\mcI_{\text{active}}(t) \subseteq \mcB \times \mcL$, the set of
observed interactions between buyers and sellers at time~$t$. These pairs serve 
as both an interaction graph and an index set for the time-dependent strategy
array $\mathbf{s}(t)$.

\subsection{Projection-Based Influence Propagation}

Let $\mcI_{\text{active}}(t)\subseteq\mcB\times\mcL$ be the set of buyer–seller pairs 
that submit positive bids at time~$t$. Each pair $(i,j)$ indexes the strategy 
array $\mathbf s(t)\in S(t)$, so $\mcI_{\text{active}}(t)$ is both a graph on
$\mcB\cup\mcL$ and an index set for the strategy space.

\noindent\textbf{Projection maps.}
\[
\pi:\mcI_{\text{active}}(t) \longrightarrow \mcB, \qquad \pi(i,j)=i, \qquad
\varpi:\mcI_{\text{active}}(t) \longrightarrow \mcL, \qquad \varpi(i,j)=j.
\]

\begin{itemize}
    \item \emph{Structural role.}  Alternating compositions
          $\varpi \circ \pi^{-1} \circ \pi \circ \varpi^{-1}\dots$
          trace paths through the bipartite auction graph, giving
          $n$-hop neighborhoods.
    \item \emph{Strategic role.}  Acting on $S(t)$, the same maps carve out
          partial strategy profiles (e.g. all bids of a given buyer).
\end{itemize}

\noindent\textbf{Full pre-images.}  
We take the composition of the the projections in order to restrict and 
vectorize the space $S(t)$. For any buyer~$i$ and seller~$j$ we write
\[
\varpi^{-1}(i,t)=\{(i,j')\in\mcI_{\text{active}}(t)\},
\quad
\pi^{-1}(j,t)=\{(i',j)\in\mcI_{\text{active}}(t)\},
\]
so that
\[
\varpi \circ \pi^{-1}(i,t) = \{j'\mid(i,j')\in\mcI_{\text{active}}(t)\} = \mcL_i(t),
\quad
\pi \circ \varpi^{-1}(j,t) = \{i'\mid(i',j)\in\mcI_{\text{active}}(t)\} = \mcB^{j}(t).
\]

\noindent\textbf{$n$-step influence.}  
Because the graph is bipartite, two successive projections always return a vertex 
of the \emph{same} type. These projections serve dual purposes: structurally, they 
trace paths through the auction graph, alternating between buyers and sellers; 
strategically, they extract subspaces of $S(t)$ that represent partial strategies
or responses and may evolve as patterns in the form of active bids sets.

\paragraph*{Connected components via iterated projections}

Because the auction graph is bipartite, two successive projections return a vertex of the 
same type.  Define the composition operators
\[
P:=\varpi\circ\pi^{-1}\quad(\text{buyer}\to\text{seller}),\qquad
Q:=\pi\circ\varpi^{-1}\quad(\text{seller}\to\text{buyer}).
\]

\noindent
Starting from a seller~$j$, one step of influence expansion is
\[
\Lambda_{\mcL}^{(1)}(j,t)=P\ Q\ (j)=
\varpi\circ\pi^{-1}\circ\pi\circ\varpi^{-1}(j,t),
\]
which moves $\text{seller}\to\text{buyers}\to\text{sellers}$.
Analogously, for a buyer~$i$ we set $\Lambda_{\mcB}^{(1)}(i,t)=Q P(i)$.

\noindent
The $n$-hop neighborhoods follow by simple iteration,
\[
\Lambda_{\mcL}^{(n)}(j,t)=(P\ Q)^{\ n}(j),\qquad
\Lambda_{\mcB}^{(n)}(i,t)=(Q\ P)^{\ n}(i),\qquad n\ge1,
\]
with the base case $\Lambda^{(0)}(x,t)=\{x\}$.

Each application of $PQ$ (or $QP$) adds exactly one buyer–seller alternation, 
so $\Lambda^{(n)}(x,t)$ is the breadth-first shell lying $n$ hops away from $x$.  
Iterating until $\Lambda^{(n)}(x,t)=\Lambda^{(n-1)}(x,t)$ closes the connected 
component containing $x$.  Thus the recursive projection operator captures both 
direct and indirect influence flows. Thus, $\Lambda^{(n)}$ may be interpreted as 
the $n$-step neighborhood in the auction graph and as a dynamic closure of 
best-response behavior. Each expansion layer captures not just structural 
proximity but strategic influence--the transmission of incentive, information, 
and utility across the network. In this way, we convert local participation 
patterns into global influence propagation, formalized as graph-theoretic 
expansions over the projected structure of the strategy space.

\subsection{Partial Ordering and Market Shifts}

While the projection mappings $\pi$, $\varpi$ and their compositions produce index sets
(subsets of $\mcB$ or $\mcL$), these sets are characterized by the underlying
strategy space $S(t)$. These indices correspond directly to elements of the strategy
space $S(t)$, which contains structured bid information $s_i^j(t)$. That is, the projection 
$\varpi^{-1}(i,t)$ retrieves all bid tuples $(i, j)$ in the index set, but equivalently
defines the subspace of $s_i(t)$ consisting of all bid array submitted by buyer~$i$ at time~$t$. 

Each element $s^j(t) \in S^j(t)$ is a bid array, and the collection $\pi \circ \varpi^{-1}(j,t)$.
Aside from being an index set of buyers, we have a set of bid arrays 
$\{s_i^j(t)\}_{i\in\mcB^j(t)}$ that can be 
partially ordered by their prices $p_{(n)}^j(t)$. While the projection operators isolate 
buyers or sellers structurally, the \emph{functional influence} between market participants 
is mediated through the comparison of bid prices. 

This introduces a natural partial order
among bidders for each seller at a fixed time, and we define a partial ordering on $S^j(t) \subset S(t)$
by $p_i^j(t) < p_k^j(t)$ if buyer~$i$ bids less than buyer $k$ for the same seller~$j$.
Given the set of buyers $\mcB^j(t) = \pi \circ \varpi^{-1}(j,t)$, we may impose a partial 
order structure based on the associated price bids $\{p_i^j(t)\}$. This ordering determines 
which buyers are accepted by seller~$j$ (those with highest prices until the resource is 
exhausted), and the sensitivity of a potential equilibrium to shifts in bidding behavior.

As shown in~\cite{Blocher2021}, a market shift occurs precisely when
a buyer outside of $\mcB^j(t)$ improves their relative position in this order, causing $\mcB^j(t)$ 
to be recomputed. Such shifts reflect structural changes in functional influence, 
as they alter the competitive hierarchy among bids and propagate through the
multi-auction environment.
Specifically, a market shift in auction~$j$ occurs when the partial order of bids at 
seller~$j$ changes in a way that affects allocation.
From \cite{Blocher2021}, two cases are critical:
\begin{enumerate}
    \item \textbf{Demand Shortfall:} A buyer $i \in \mcB^j(t)$ reduces their bid quantity so
    that total demand falls below available supply:
    \[
    \sum_{i\in\mcB^j(t)} a_i^j(t) < Q^j(t).
    \]
    The auction must recompute its reserve price or reallocate supply among remaining buyers.      
    \item \textbf{Bid Overtake:} A buyer $i^* \notin \mcB^j(t)$ improves their valuation so that
    \[
    p_{i^*}^j(t) < p_*^j(t),
    \]
    where $p_*^j(t)$ is the minimum accepted bid price at auction~$j$. The buyer $i^*$ 
    displaces the marginal buyer, triggering a shift in $\mcB^j(t)$.
\end{enumerate}
\noindent
Either case changes the minimal winning price, breaking the partial ordering within the
projected sets, and forces the auction to recompute $\mcB^{j}(t)$. As the seller frontier 
$\Lambda^{(1)}_{\mcL}(j,t)$ consists of sellers connected to~$j$ through shared buyers, 
the reallocation thereby propagates influence through layers of the the expansion 
$\Lambda^{(n)}_{\mcL}(j,t)=(\varpi\circ\pi^{-1}\circ\pi\circ\varpi^{-1})^{n}(j)$. 


\begin{figure}[t]
\centering
\begin{tikzpicture}[
  arr/.style={-{Latex[length=3mm]}, line width=0.6pt}
]

\def\I{6}
\def\J{5}
\def\viewAz{100}   
\def\viewEl{55}   
\def\w{6.9cm}     
\def\h{5.2cm}     

\node[anchor=south west] at (0,-1) {\textbf{State A (before)}};

\begin{axis}[
  at={(0cm,0.4cm)}, anchor=south west,
  width=\w, height=\h, view={\viewAz}{\viewEl},
  xlabel={seller $j$}, ylabel={buyer $i$}, zlabel={tier},
  xmin=0.5, xmax=\J+0.5, ymin=0.5, ymax=\I+0.5, zmin=0, zmax=3,
  xtick={1,...,\J}, ytick={1,...,\I}, ztick={0,1,2,3},
  axis on top,
  colormap/viridis,
  zmajorgrids, ymajorgrids, xmajorgrids,
  grid style={black!15},
  enlargelimits=false, clip=true,
  domain=1:\J, y domain=1:\I
]

\addplot3[surf, opacity=0.85, shader=interp, samples=10, samples y=10]
  ({x},{y},
   { (y<=2) * 3 + (y>2 && y<=4) * 2 + (y>4) * 1 });

\addplot3+[only marks, mark=*, mark size=1.2pt] table[row sep=\\] {
  1   3   1.02\\ 
  1   1   3.02\\ 
  2   5   1.02\\ 
  2   2   3.02\\ 
  2   3   2.02\\ 
  3   6   1.02\\ 
  3   4   2.02\\ 
  4   1   3.02\\ 
  4   2   3.02\\ 
  5   5   1.02\\ 
  5   4   2.02\\ 
};

\addplot3+[only marks, mark=o, mark size=3.6pt, line width=0.9pt] table[row sep=\\] {
  1 2 3.02\\ 
  2 3 2.02\\ 
  3 4 2.02\\ 
  4 2 3.02\\ 
  5 4 2.02\\ 
};

\end{axis}

\node (sepA) at (\w-0.8cm,3.9cm) {};
\node (sepB) at (\w+0.2cm,3.9cm) {};
\draw[arr] (sepA) -- (sepB) node[midway, above] {\small phase transitions};

\node[anchor=south west] at (\w+3.2cm,-1) {\textbf{State B (after)}};

\begin{axis}[
  at={(\w+0.5cm,0.4cm)},   anchor=south west,
  width=\w, height=\h, view={\viewAz}{\viewEl},
  xlabel={seller $j$}, ylabel={buyer $i$}, zlabel={tier},
  xmin=0.5, xmax=\J+0.5, ymin=0.5, ymax=\I+0.5, zmin=0, zmax=3,
  xtick={1,...,\J}, ytick={1,...,\I}, ztick={0,1,2,3},
  axis on top,
  colormap/viridis,
  zmajorgrids, ymajorgrids, xmajorgrids,
  grid style={black!15},
  enlargelimits=false, clip=true
]

\addplot3[surf, opacity=0.85, shader=interp, samples=10, samples y=10]
  ({x},{y},
   { (y<=2) * 3 + (y>2 && y<=4) * 2 + (y>4) * 1 });

\addplot3+[only marks, mark=*, mark size=1.2pt] table[row sep=\\] {
  1  6  1.02\\ 
  1  2  3.02\\
  1  1  3.02\\
  2  5  1.02\\
  2  1  3.02\\ 
  2  2  3.02\\
  3  6  1.02\\
  3  4  2.02\\
  4  3  2.02\\ 
  4  1  3.02\\
  5  5  1.02\\
  5  4  2.02\\
};

\addplot3+[only marks, mark=o, mark size=3.6pt, line width=0.9pt] table[row sep=\\] {
  1  1  3.02\\ 
  2  2  3.02\\
  3  4  2.02\\
  4  1  3.02\\
  5  4  2.02\\
};

\addplot3[
  fill=black, fill opacity=0.08, draw=black!30, line width=0.2pt
]
coordinates {
  (0.85,0.5,3) (1.15,0.5,3) (1.15,6.5,3) (0.85,6.5,3)
} -- cycle;

\end{axis}

\end{tikzpicture}
\caption{3D matrix view: rows (buyers), columns (sellers), and $z$ encodes price tiers. The colored surface shows the buyer price; filled markers are active bids; open circles show marginal winners. The right panel shows a transition where a new high-tier participant appears at $j{=}1$, a demand shortfall removes a low cell, and a reconfiguration shifts activity at $j{=}2$.}
\end{figure}
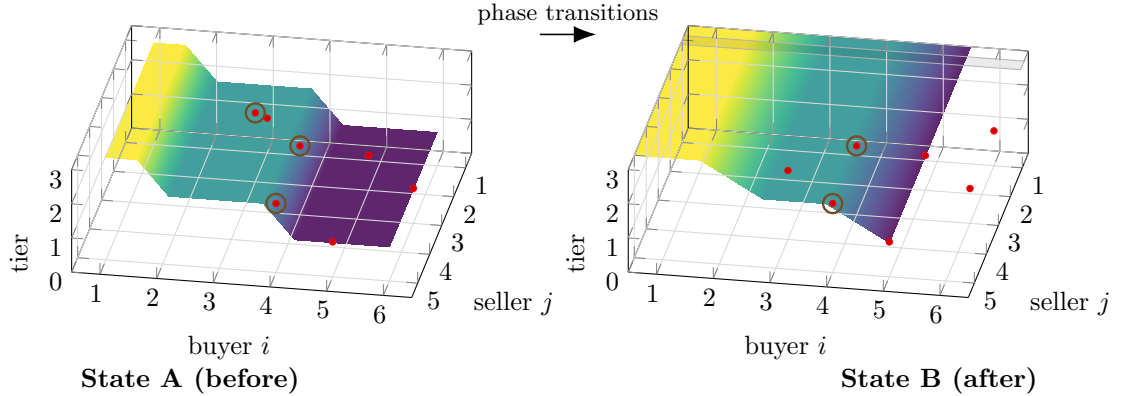

\section{Influence Shells and Local Saturation}\label{sec:saturation}

The projection and ordering framework developed above allows us to describe how influence 
propagates through the auction network. We now ask under what conditions this propagation 
stabilizes. As buyers and sellers iteratively adjust bids, certain neighborhoods of the 
market reach a state in which no participant can improve their utility through unilateral
deviation. These locally stable regions form influence shells--bounded subgraphs within 
which allocations, prices, and bid updates remain consistent under further iterations.
When every buyer and seller in such a region satisfies this best-response property, 
the shell is said to be saturated. The following gives the formal notation and 
enumerates the assumptions that we have made in the generalization of influence sets as
were defined in~\cite{Blocher2021}.

\begin{definition}[Saturated Influence Shell]\label{def:saturated_shell}
A primary influence set $\Lambda^{(1)}_{\mcL}(j,t)$ associated with seller~$j$ at time~$t$ 
is said to be saturated if no buyer or seller within this set can improve their 
utility by unilaterally altering their bids. Formally, for every buyer $i \in \mcB^j(t)$ 
and every seller $\ell \in \Lambda^{(1)}_{\mcL}(j,t)$, the following holds,
\[
u_i(t) \geq u_i'(t), \quad \text{for any feasible alternative strategy} \quad s_i'(t).
\]
\end{definition}
\noindent
Global market equilibrium decomposes into interconnected saturated shells, each functioning as 
stable subsystems. We establish conditions for the existence of a saturated influence shell.

\begin{enumerate}[(i.)]
\item \textbf{Countable and Locally Finite Graph.}
The sets of buyers $\mcB$ and sellers $\mcL$ are at most countably infinite. 
Each participant engages in only finitely many transactions, ensuring finite degree at 
every instant. This guarantees that all projection maps ($\pi,\varpi$) encounter only 
finite fibres and that the influence operator \eqref{eq:Lambda-recursive} perform finite unions.

Buyers and sellers participate in locally finite networks, enabling stable equilibrium 
convergence within compact, bounded strategy spaces. Market rules explicitly 
limit resources and interactions, ensuring finite dimensionality.

\item \textbf{Bounded Influence and Bids.}
Influence propagation strength remains bounded, preventing divergence. Each seller’s 
fixed endowment $Q^j$ and each buyer’s fixed demand cap $Q_i$ are finite and 
time-independent. Hence every non-zero bid quantity~$q_i^j(t)$ lies in the 
compact interval $[0,Q_i]$, and every realized allocation is in $[0,Q^{j}]$.

\item \textbf{Partial Ordering Stability.} The partial ordering induced by the 
bid structure must satisfy stable bid threshold rankings for all relevant buyers 
and sellers within the shell, thus establishing clear marginal price tiers.
\end{enumerate}

\subsection{Ordering and Influence Propagation}

We recall the ordering relationship from~\cite{Blocher2021} that holds for any seller 
within a saturated primary influence set $\Sigma := \Lambda^{(1)}_\mcL(j,t)$. 
\begin{lemma}[Local Price Ladder {\cite[Thm.~2.3, proof]{Blocher2021}}]\label{lem:ladder}
Let the market be at time $t$ with seller $j$ and its saturated primary
influence set
$\Sigma := \Lambda^{(1)}_{\mcL}(j,t)$.  Pick any neighbor
$k\in\Sigma$ and two buyers
\[
  i \in \mcB^{j}(t), \qquad
  \ell \in \Lambda_{\mcB}^{(1)}(i,t)\setminus\mcB^{j}(t),
\]
i.e., $i$ bids on $j$, $k$ is another seller reached from $i$, and
$\ell$ is a buyer that bridges further to $k$ but not to $j$.

If the shell 
$\Lambda^{(1)}_{\mcL}(j,t)$ is saturated--no profitable deviation exists for any 
vertex in this set--because $\ell$ does not bid on~$j$ while~$i$ bids on both~$j$ 
and~$k$, the marginal thresholds must nest as follows
\begin{equation}\label{eq:interval-ordering}
  \sup \big(\overline p^k(t),\, \underline p^k(t)\big)\ \le\ p_\ell^{*}(t)
  \ < \inf \big(\overline p^j(t),\, \underline p^j(t)\big)  \ \le\ p_i^{*}(t) ,
\end{equation}
where the marginal intervals are chosen as in~\eqref{eq:margins}, 
and in particular, every choice of reserve prices
$p_*^k(t)\in \big(\overline p^k(t),\, \underline p^k(t)\big)$
and $p_*^j(t)\in \big(\overline p^j(t),\, \underline p^j(t)\big)$ satisfies
\begin{equation}\label{eq:ordering}
  p_*^{k}(t)\ \le\ p_\ell^{*}(t)\ <\ p_*^{j}(t)\ \le\ p_i^{*}(t).
\end{equation}
\end{lemma}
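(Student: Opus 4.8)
The plan is to establish the nested chain~\eqref{eq:interval-ordering} first and then read off~\eqref{eq:ordering} as an immediate corollary. Throughout I keep the uniform marginal price convention: buyer $i$ submits the single price $p_i^*(t)$ to every seller in $\mcL_i(t)$, so $p_i^{j'}(t)=p_i^*(t)$ for every active pair $(i,j')$, and likewise $p_\ell^{k}(t)=p_\ell^*(t)$. I also use two facts already assembled in Section~\ref{sec:mechanism}: the margin definitions~\eqref{eq:margins} with $\overline p^{j'}(t)<\underline p^{j'}(t)$, and the reserve-interval bound~\eqref{eq:reserve_interval}, $\overline p^{j'}(t)<p_*^{j'}(t)<\underline p^{j'}(t)$, together with the $\varepsilon$-separation built into the clearing price $\chi^{j'}(t)$ (the chain displayed after~\eqref{eq:margins}).

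First I would dispatch the two outer inequalities, which are immediate from the definitions. Since $i\in\mcB^j(t)$, buyer $i$ wins a positive allocation at $j$, so $p_i^*(t)=p_i^j(t)\ge\underline p^j(t)>\overline p^j(t)=\inf\!\big(\overline p^j(t),\underline p^j(t)\big)$, which is the right-hand inequality of~\eqref{eq:interval-ordering}; this already yields the sharper bound $\underline p^j(t)\le p_i^*(t)$ needed later for~\eqref{eq:ordering}. Symmetrically, the choice of $k$ forces $\ell\in\mcB^k(t)$ (this is what ``$\ell$ bridges further to $k$'' records), so $\ell$ wins at $k$ and $p_\ell^*(t)=p_\ell^k(t)\ge\underline p^k(t)=\sup\!\big(\overline p^k(t),\underline p^k(t)\big)$, the left-hand inequality of~\eqref{eq:interval-ordering}. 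Both steps use only that $\underline p^{\bullet}$ is the minimum over the winners and that $\overline p^{\bullet}<\underline p^{\bullet}$.

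The heart of the proof is the strict middle inequality $p_\ell^*(t)<\inf(\overline p^j(t),\underline p^j(t))=\overline p^j(t)$; here saturation of $\Sigma$ enters, and the argument adapts~\cite[proof of Thm.~2.3]{Blocher2021}. I would argue by contradiction: assume $p_\ell^*(t)\ge\overline p^j(t)$. By hypothesis $\ell$ does not bid on $j$, and $\overline p^j(t)$ is defined, so some buyer is actively bidding (and losing) at $j$ at the marginal losing price $\overline p^j(t)\le p_\ell^*(t)$. Combined with assumption (iii) -- stable, globally comparable marginal price tiers within the shell -- and the bounded participation rule of Definition~\ref{def:bounded_participation}, under which each buyer adds least-cost sellers until its desired total quantity is reached, this places seller $j$ within $\ell$'s affordable range with $\ell$'s demand not yet exhausted through strictly cheaper sellers. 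Hence the deviation in which $\ell$ moves a small positive quantity of its bid onto $j$ at price $p_\ell^*(t)$ is feasible, and by the exclusion-compensation rule~\eqref{eq:const_cost}--\eqref{eq:local-inv-price} it wins $\ell$ a positive allocation at per-unit cost at most $\overline p^j(t)$, strictly below the marginal valuation sustaining $p_\ell^*(t)$, so $\ell$'s utility rises past the $\varepsilon$ update threshold. Since $\ell$ is a buyer active at $k\in\Sigma$, this contradicts Definition~\ref{def:saturated_shell} (read, as the surrounding text indicates, over all participants in the shell), and therefore $p_\ell^*(t)<\overline p^j(t)$, completing~\eqref{eq:interval-ordering}.

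Finally,~\eqref{eq:ordering} follows from~\eqref{eq:interval-ordering} and the reserve-interval bound: for any admissible $p_*^k(t)\in(\overline p^k(t),\underline p^k(t))$ we get $p_*^k(t)<\underline p^k(t)\le p_\ell^*(t)$, and for any admissible $p_*^j(t)\in(\overline p^j(t),\underline p^j(t))$ we get $p_\ell^*(t)<\overline p^j(t)<p_*^j(t)<\underline p^j(t)\le p_i^*(t)$; chaining these gives exactly $p_*^k(t)\le p_\ell^*(t)<p_*^j(t)\le p_i^*(t)$. I expect the middle step to be the main obstacle: upgrading ``$\ell$ declines $j$'' to the strict bound $p_\ell^*(t)<\overline p^j(t)$ -- rather than only $p_\ell^*(t)\le\overline p^j(t)$, or the weaker $p_\ell^*(t)<\underline p^j(t)$ -- needs the standard PSP best-response machinery (truthful bidding at the marginal valuation, the piecewise-constant externality price, and the $\varepsilon$-slack among $\overline p^{j'}$, $\chi^{j'}$, $\underline p^{j'}$) to certify that the reallocation toward $j$ is strictly profitable, and it leans essentially on assumption (iii) so that the bid tiers at $j$ and at $k$ are globally comparable. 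A secondary point is verifying that ``feasible alternative strategy $s_\ell'(t)$'' in Definition~\ref{def:saturated_shell} includes strategies that enlarge $\mcL_\ell(t)$, and that the saturation hypothesis genuinely constrains $\ell$ even though $\ell$ bids on $k\in\Sigma$ rather than on $j$ itself.
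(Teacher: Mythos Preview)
Your proposal is correct and follows essentially the same route as the paper: the two outer inequalities come directly from $i\in\mcB^{j}(t)$ and $\ell\in\mcB^{k}(t)$, the strict middle inequality is obtained from saturation by ruling out a profitable deviation of $\ell$ toward seller~$j$, and \eqref{eq:ordering} is then read off from \eqref{eq:interval-ordering} via the reserve-interval bound. The paper's own proof is considerably terser---it largely defers to \cite[Thm.~2.3]{Blocher2021} and only names the three inequalities---so your version is in fact more explicit about the contradiction step and the technical caveats (strictness via the $\varepsilon$-slack, the need for Definition~\ref{def:saturated_shell} to cover deviations that enlarge $\mcL_\ell(t)$) than the paper itself.
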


\begin{proof}[Proof]
The argument for~\eqref{eq:ordering}
is contained within the proof of \cite[Thm.~2.3]{Blocher2021}. Left–to–right the chain reads
\begin{enumerate}[(i.)]
  \item $p_*^{k}$ seller~$k$’s reserve price;
  \item $p_{\ell}^{*}$ buyer $\ell$’s bid that clears the marginal tier in \emph{both} auctions;
  \item $p_*^{j}$ seller~$j$’s reserve price;
  \item $p_{i}^{*}$ the highest active bid of buyer~$i$ on seller~$j$.
\end{enumerate}

The left inequality holds because~$k$ clears at the minimum of the two buyers’ bids; 
the strict middle inequality follows from~$\ell$ placing no bid on~$j$; the right inequality
is enforced by buyer~$i$’s cross-auction participation.
In a saturated shell, any profitable deviation by $k$ toward $j$ or by
$i$ away from $j$ is ruled out.  Hence buyer $\ell$'s marginal valuation
must lie weakly above all prices at which $newk$ can clear its final
unit, while buyer $i$'s valuation must lie weakly below all prices at
which $j$ can still clear.  This forces the margin intervals to nest
as in~\eqref{eq:interval-ordering}.  Since
$p_*^k(t)\in \big(\overline p^k(t),\, \underline p^k(t)\big)$
and $p_*^j(t)\in \big(\overline p^j(t),\, \underline p^j(t)\big)$
by the definition of the reserve price $p_*^j$,
the pointwise ladder~\eqref{eq:ordering} follows immediately.
\end{proof}

Economically, this implies that higher prices at neighboring sellers prevent 
buyers from deviating profitably, ensuring that no participant has an incentive to alter
their bidding strategy unilaterally. Hence, the ordering captures a stable distribution of 
resources and prices, reflecting locally optimal market conditions.

\paragraph*{Weak (local) Monotonicity}

The projected influence sets, together with the induced partial orders, thus form the
dynamic framework for market evolution, where projections identify which vertices 
are connected, and partial orders determine how influence is transmitted via price 
shifts. Market shifts occur when the partial order structure is perturbed beyond
certain thresholds, forcing recomputation of $\mcB^j(t)$ or reserve prices.

\begin{proposition}[Local Monotonicity]\label{prop:local-monotone}
Let $\Sigma = \Lambda^{(1)}_{\mcL}(j,t)$ be the saturated one-hop shell
of seller~$j$ at time~$t$. 
For each seller $k\in\Sigma$, let the marginal intervals be chosen as in~\eqref{eq:margins},
and let the reserve price $p_*^k(t)$ be selected
inside the interval $(\overline p^k(t),\, \underline p^k(t))$.

Now, consider the vector of seller reserves and buyer marginal valuations
restricted to $\Sigma$. 
Under elastic, strictly decreasing valuation functions $\theta_i'(z)$
and any reserve-update rule that
\begin{enumerate}[(i)]
\smallskip
  \item selects $p_*^k(t+1) \in (\overline p^k(t+1), \underline p^k(t+1))$ for each $k\in\Sigma$, and
  \smallskip
  \item is nondecreasing in the bids $\{p_i^k(t)\}_{i}$ and the previous
        reserve $p_*^k(t)$,
\smallskip
\end{enumerate}
the PSP price–update map
\begin{equation}\label{eq:weak-monotone}
   p_*^k(t)  \mapsto \
   \tilde p_*^k(t+1)\quad \text{and} \quad p_i^k(t) \mapsto \tilde p_i^k(t+1)
\end{equation}
is locally monotone on $\Sigma$.
\end{proposition}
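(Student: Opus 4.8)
The plan is to establish local monotonicity as a componentwise-monotone fixed-point statement in the style of Tarski/Topkis, leveraging the nesting of marginal intervals from Lemma~\ref{lem:ladder} to guarantee that the update map stays inside its admissible window. First I would fix the saturated shell $\Sigma = \Lambda^{(1)}_{\mcL}(j,t)$ and assemble the state vector $x(t) = \bigl(\{p_*^k(t)\}_{k\in\Sigma},\, \{p_i^k(t)\}_{i,\,k\in\Sigma}\bigr)$ living in a product of compact intervals (finite-dimensional by assumption (i.) of the saturation conditions and bounded by assumption (ii.)). The map in \eqref{eq:weak-monotone} sends $x(t)$ to $\tilde x(t+1)$ by (a) recomputing each seller's marginal interval $(\overline p^k(t),\underline p^k(t))$ from the current bids, (b) selecting $p_*^k(t+1)$ inside it via the given reserve-update rule, and (c) letting each buyer's marginal valuation respond through the strictly decreasing $\theta_i'$ against the externality price that the updated opposing bids induce. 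I would verify that each of these three stages is nondecreasing in $x(t)$: stage (b) is nondecreasing by hypothesis (ii) of the proposition; stage (a) is nondecreasing because $\overline p^k$ and $\underline p^k$ are order statistics of the bid vector, hence monotone in the bids; and stage (c) is where the partial order on prices feeds back — here I use that a buyer's best response to a higher opposing marginal-price function is a (weakly) higher bid, since $\theta_i'$ is strictly decreasing and the externality integrand $P^j(z,s_{-i}^j)$ in \eqref{eq:const_cost}–\eqref{eq:local-inv-price} shifts upward pointwise when opposing bids rise.

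Second, I would argue that the composition stays feasible — i.e., the selected $p_*^k(t+1)$ genuinely lies in the open interval $(\overline p^k(t+1), \underline p^k(t+1))$ rather than being clipped — using the price-ladder nesting \eqref{eq:interval-ordering}–\eqref{eq:ordering} of Lemma~\ref{lem:ladder}. Because the shell is saturated, the marginal thresholds at neighboring sellers already nest strictly ($p_*^k \le p_\ell^* < p_*^j \le p_i^*$), so a uniformly nondecreasing perturbation of all coordinates preserves the strict separation between the highest losing and lowest winning bid at each $k$; the interval does not collapse, and requirement (i) of the proposition is self-consistent. This is the structural payoff of working inside a saturated shell: the "no profitable deviation" property is exactly what prevents a buyer at the margin of one auction from jumping the ladder and destroying the interval that the next update must target.

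Third, with componentwise monotonicity of the update map on a compact product lattice in hand, I would state that iterating \eqref{eq:weak-monotone} from an ordered initial pair produces a monotone sequence; combined with the nondecreasing reserve hypothesis from the mechanism section (the sequence $\{p_*^j(t)\}_t$ is required nondecreasing with $\epsilon$-bounded steps), the orbit is monotone and bounded, hence the map is locally monotone on $\Sigma$ in the asserted sense. The main obstacle I anticipate is stage (c): making precise that the buyer-side response $p_i^k(t)\mapsto \tilde p_i^k(t+1)$ is monotone requires pinning down how the common marginal price $p_i^*$ across a buyer's connected sellers (the bounded-participation construction of Definition~\ref{def:bounded_participation}) reacts when opposing bids move — a priori, raising bids at one seller could push the buyer to reshuffle $\mathcal{L}_i(t)$ and drop a seller, which is not obviously order-preserving at the level of individual coordinates. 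I would handle this by restricting attention, as the proposition implicitly does, to perturbations small enough (within the $\epsilon$-update regime) that the support set $\mathcal{L}_i(t)$ is locally constant, so the aggregate residual-supply function $Q_i(y,s_{-i})$ in \eqref{eq:agg-residual} varies continuously and monotonically, and the buyer's marginal-price response inherits monotonicity from the monotone dependence of $P_i(z,s_{-i})$ on the opposing bid vector.
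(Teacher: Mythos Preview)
Your proposal is correct and shares the same skeleton as the paper's proof: both rely on Lemma~\ref{lem:ladder} to ensure the marginal intervals nest and do not collapse under perturbation, and both read off seller-side monotonicity directly from hypothesis~(ii). The paper's argument is considerably terser: it asserts that elasticity of $\theta_i'$ prevents any reversal of the bid ordering defining $(\overline p^k,\underline p^k)$, invokes hypothesis~(ii) for the reserve coordinate, and concludes componentwise monotonicity without further decomposition.

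Where you differ is in rigor and scope. Your three-stage decomposition (order statistics $\to$ reserve selection $\to$ buyer response) and the explicit Tarski/Topkis framing on a compact product lattice make precise what the paper leaves implicit. More importantly, you identify a genuine subtlety the paper does not address: the buyer-side map in stage~(c) is only coordinatewise monotone if the support set $\mathcal{L}_i(t)$ stays fixed, since the bounded-participation rule could otherwise drop a seller and break monotonicity at that coordinate. Your resolution---restricting to the $\epsilon$-update regime so that $\mathcal{L}_i(t)$ is locally constant and $Q_i(y,s_{-i})$ varies monotonically---is the right patch, and it is arguably needed for the statement to hold as written. The paper's proof simply asserts that ``all ladder relations remain invariant'' without isolating this mechanism, so your version is the more complete one.
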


\begin{proof}
Elasticity of the valuation functions implies that each buyer’s marginal
value $p_i^{*}(t)=\theta_i'(z_i(t))$ is strictly decreasing in its own
allocation. Increasing any bid or reserve inside the saturated shell
$\Sigma$ may raise some allocations and lower others, but it cannot
create a reversal of the bid ordering that defines the interval
$(\overline p^k(t),\, \underline p^k(t))$. In particular, all ladder
relations \eqref{eq:ordering} remain invariant under such updates.

For sellers, the reserve-update rule is assumed nondecreasing in both
the previous reserve $p_*^k(t)$ and the local bids $\{p_i^k(t)\}_{i}$.
Thus any componentwise increase in $(p_*^k(t),p_i^k(t))$ cannot decrease
any updated reserve $p_*^k(t+1)$. By construction, each updated reserve
remains inside its interval $(\overline p^k(t+1),\,\underline p^k(t+1))$,
and, because the shell $\Sigma$ is saturated, these intervals evolve
compatibly with the ordering relations and cannot induce a downward jump
that violates the ladder.

Therefore, every coordinate of the updated pair
$(p_*^k(t+1),\,p_i^{k}(t+1))$ is weakly increasing in the corresponding
coordinate of $(p_*^k(t),p_i^k(t))$. Hence the update rule satisfies the
order-preserving property \eqref{eq:weak-monotone}, and the PSP
price–update map is locally monotone on $\Sigma$.
\end{proof}


The partial ordering structure induced by bidding behavior is essential in 
analyzing and predicting the direction and magnitude of market shifts resulting from 
influence dynamics. A local allocation triggers global bid adaptation, reinforcing 
that while seller auctions operate independently, buyer strategy space remains tightly 
coupled. In integrated markets (scarce supply), the partial orders are dense and
tightly coupled, making markets highly sensitive and globally coordinated. 
In fragmented markets (abundant supply), the partial orders become sparse and disconnected, 
leading to localized equilibria and insulating submarkets from external shocks. 
Thus, the transition from integrated to fragmented equilibrium is not just a graph 
phenomenon--it is a transition in the connectivity of the partial order structure induced by bidding.

\begin{remark}[On the ordering of the PSP price map]\label{rem:map-ordering}
Although the local update map is written on the pair $(p^{j},p_{i}^{j})$,
the PSP rule actually \emph{acts} only on the first coordinate:
the buyer’s bid $p_{i}^{j}$ is simply carried forward 
(or set to $0$ if $j\notin\mcL_i$).
\end{remark}

\subsection{Asynchronous Sellers and Coupled Buyers}

To represent the fine-grained dynamics of bid selection and displacement within an auction
round, we introduce an internal index~$\tau_k$ to describe local progression steps.  Each
$\tau_k$ denotes a partial–ordering resolution event—an allocation decision at auction~$j$
followed by an update to the reserve price and potentially to the projected sets.  The index
acts as a local time variable inside the global iteration~$t$, allowing us to separate
micro–adjustments from round-to-round evolution.

\begin{definition}[Allocation Step $\tau_k$] 
At each $\tau_k$ within round~$t$, seller~$j$ selects the highest bidder
in $\pi\circ\varpi^{-1}(j)$ not yet fulfilled, allocates a feasible amount
$a_i^j(\tau_k)$, updates the reserve price $p_*^j(\tau_{k+1})$, and recomputes $\mcB^j$ and
$\Lambda_\mcL(j)$ as needed. 
\end{definition}

Each $\tau_k$ inside a global round~$t$ is therefore a local ordering–resolution event: seller~$j$
picks the highest unfilled bidder, allocates a feasible amount $a_i^{j}(\tau_k)$, updates its
reserve, and recomputes the bidder set~$\mcB^{j}(t)$ as well as the seller shell
$\Lambda^{(1)}_{\mcL}(j,t)$.  The sequence $\tau_1,\tau_2,\dots$ terminates when no remaining
buyer meets the current reserve or when supply is exhausted.

Sellers operate independently: each seller’s $\tau_k$ sequence proceeds without synchronization
with others.  However, buyers must maintain a consistent strategy across all sellers they bid on.
Since the buyer’s bid array~$\sigma_i(t)$ is defined jointly over~$\mcL_i(t)$, any change to the
outcome of one auction requires coordinated updates across all components.  This coupling between
independent seller threads through shared buyers produces the feedback mechanism responsible for
market coherence.  From a game–theoretic perspective, each seller executes a local best–response
process, while each buyer enforces cross–auction consistency of marginal valuation.

We define the buyer update rule as
\begin{equation}\label{eq:buyer_update}
    q_i^{j}(\tau_{k+1}) = Q_i(t) -
    \sum_{j' \in \mcL_i(t)} a_i^{j'}(\tau_k),
    \qquad \forall j \in \mcL_i(t),
\end{equation}
indicating that the buyer updates all bids simultaneously based on observed allocations.
Equation~\eqref{eq:buyer_update} ensures that a buyer’s total requested quantity never exceeds its
available resource~$Q_i(t)$ and redistributes residual demand across the active seller set~$\mcL_i(t)$.  The
rule formalizes how buyers translate local allocation feedback into revised offers, maintaining a
form of budget balance across asynchronous auctions.

Because sellers run their $\tau_k$ threads asynchronously while buyers must update all bids
coherently according to~\eqref{eq:buyer_update}, local price changes propagate through the partial
orders defined above, layer by layer.  Each seller’s reserve adjustment initiates a chain of
bid updates in the neighborhoods that share its buyers.  This extends the projection–ordering
framework, enabling intra–round modeling of bid dynamics, bid–induced reordering, and precise
tracking of influence propagation through updates to the projected domains.  In this sense, the
$\tau_k$ sequence acts as a micro–time resolution that reveals how local saturation unfolds inside
each global auction round.

\paragraph*{Stability under asynchronous evolution}
The following proposition shows that, even under these independent update threads, the relative
ordering of bids remains stable and the local price ladder
is preserved.

\begin{proposition}[Saturated Shell]\label{prop:sat-shell}
Let $\Sigma=\Lambda^{(1)}_{\mcL}(j,t)$ be saturated at $\tau_k$. Assume that for every seller $k\in\Sigma$
the reserve price lies in the interval,
\[
    \overline p^k(\tau_k)\,<\, p_*^k(\tau_k) \,<\, \underline p^k(\tau_k),
\]
and that each $\tau$--update preserves all ladder relations~\eqref{eq:ordering} inside~$\Sigma$. 
If a local resolution step $\tau_k\to\tau_{k+1}$ modifies the strategy space only inside~$\Sigma$, then:
\begin{enumerate}[(i)]
\item The ladder~\eqref{eq:ordering} is preserved: no ordering reversal is possible, and every affected
marginal or reserve price weakly increases; a strict increase occurs whenever the winning bid or 
reserve at some seller in $\Sigma$ rises.
\item If every buyer or seller that first appears in $\Sigma_{n+1}\setminus\Sigma_n$ has no profitable 
deviation given the preserved ladder, then the expanded shell $\Sigma_{n+1}$ is saturated.
\end{enumerate}
\end{proposition}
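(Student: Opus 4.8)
The plan is to obtain part~(i) by feeding the localized resolution step into the two structural results already proved for one-hop shells---the Local Price Ladder (Lemma~\ref{lem:ladder}) and Local Monotonicity (Proposition~\ref{prop:local-monotone})---and to obtain part~(ii) as a closure argument that checks the best-response inequality of Definition~\ref{def:saturated_shell} vertex by vertex over the breadth-first expansion. Throughout I would write $\Sigma_n := \Lambda^{(n)}_{\mcL}(j,t)$ for the $n$-hop seller shell, so that $\Sigma=\Sigma_1$, and the structural leverage is the hypothesis that the step $\tau_k\to\tau_{k+1}$ alters strategies only inside $\Sigma$: this makes every price or allocation that moves a componentwise perturbation of the local PSP price--update map, which is exactly the object Proposition~\ref{prop:local-monotone} controls.

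For part~(i), fix $k\in\Sigma$ together with buyers $i\in\mcB^{j}(\tau_k)$ and $\ell\in\Lambda^{(1)}_{\mcB}(i,\tau_k)\setminus\mcB^{j}(\tau_k)$ as in Lemma~\ref{lem:ladder}. The ``no ordering reversal'' half of the claim is the standing hypothesis that each $\tau$--update preserves the ladder~\eqref{eq:ordering}, made effective by the confinement of the step to $\Sigma$; elasticity (strictly decreasing $\theta_i'$) is what makes this consistent, since each marginal valuation $p_i^{*}=\theta_i'(z_i)$ moves continuously and monotonically with its allocation and therefore cannot jump across the strict separation $\overline p^k(\tau_k)<p_*^k(\tau_k)<\underline p^k(\tau_k)$ guaranteed at the start. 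The directional refinement is where the work lies: the hypotheses of Proposition~\ref{prop:local-monotone} hold with the global round replaced by the micro-step $\tau_k\to\tau_{k+1}$, so the map $p_*^k\mapsto\tilde p_*^k(\tau_{k+1})$, $p_i^k\mapsto\tilde p_i^k(\tau_{k+1})$ is order-preserving in each coordinate, which is the weak-increase assertion. For the strict increase I would invoke the clearing-price reading of the reserve: if the lowest winning bid $\underline p^k$ or the previous reserve $p_*^k$ strictly rises at some $k\in\Sigma$, then $\chi^k$ of~\eqref{eq:clearing_price} weakly rises and, because the admissible reserve choices are pinned to the margin interval~\eqref{eq:reserve_interval} determined by the updated bids while remaining nondecreasing, $p_*^k(\tau_{k+1})>p_*^k(\tau_k)$ strictly.

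For part~(ii), I would partition the vertices of $\Sigma_{n+1}$ into those already in $\Sigma_n$ and the frontier $\Sigma_{n+1}\setminus\Sigma_n$. Frontier vertices are immediate: the hypothesis of~(ii) states that each vertex first appearing in $\Sigma_{n+1}\setminus\Sigma_n$ has no profitable deviation given the preserved ladder. For the old vertices, saturation of $\Sigma_n$ at $\tau_k$ says they had no profitable unilateral deviation; the deviations that enlarging the shell could make attractive are precisely the cross-auction moves (a neighbor drifting toward $j$, a cross-bidder drifting away from $j$), and these are exactly the moves that the nested margin intervals forbid---so, since part~(i) keeps the ladder~\eqref{eq:ordering} in force, those deviations stay unprofitable and every old vertex remains saturated. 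Combining the two cases, every buyer and seller in $\Sigma_{n+1}$ satisfies $u_i(\tau_{k+1})\ge u_i'(\tau_{k+1})$ for every feasible alternative, so $\Sigma_{n+1}$ is saturated by Definition~\ref{def:saturated_shell}.

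The main obstacle is the ``old vertices'' step of part~(ii): Lemma~\ref{lem:ladder} is phrased only as saturation~$\Rightarrow$~ladder, whereas here I need to run that implication backwards---preserved ladder~$\Rightarrow$~the cross-auction deviations remain unprofitable. The resolution is to reuse the mechanism in the second paragraph of the proof of Lemma~\ref{lem:ladder} (where a profitable deviation by $k$ toward $j$, or by $i$ away from $j$, is shown to contradict the nested intervals) as an ``only if'' statement, which is legitimate because that argument is really an equivalence between the interval nesting and the absence of exactly those deviations. A secondary, more technical point is making the strict-increase assertion airtight: a merely nondecreasing reserve rule could stand still, so one must lean on monotonicity of the clearing price~\eqref{eq:clearing_price} in the winning bids and on the paper's standing requirement that $\{p_*^j(t)\}_t$ be nondecreasing to guarantee that a strict bump at one seller actually propagates.
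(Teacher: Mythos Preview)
Your proposal is correct and follows essentially the same approach as the paper's proof: both argue part~(i) from the margin--interval constraint together with local monotonicity, and both obtain part~(ii) as a closure argument that splits $\Sigma_{n+1}$ into the already--saturated $\Sigma_n$ and the frontier covered by hypothesis. The only real difference is presentational: you explicitly route the weak--increase step through Proposition~\ref{prop:local-monotone} and are candid about the ``reverse Lemma~\ref{lem:ladder}'' dependence for the old vertices, whereas the paper argues directly from the intervals $(\overline p^k,\underline p^k)$ and simply asserts that saturation is preserved under extension without isolating that converse step.
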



\begin{proof}
If a $\tau$--update reallocates quantity within $\Sigma$, the clearing price or reserve at the 
affected seller can only move upward within its margin interval. Because all reserves and winning 
bids in $\Sigma$ lie inside their intervals $(\overline p^k,\,\underline p^k)$, no update can
create a reversal of the ordering that defines the ladder~\eqref{eq:ordering}. Thus each affected
component moves weakly upward, and whenever the winning bid or reserve at some seller in $\Sigma$ 
increases, at least one of the four prices in the ladder strictly increases.

Saturation implies every $\tau$--update inside $\Sigma$ preserves best--response conditions given 
the ladder; hence extending $\Sigma$ by including agents in $\Sigma_{n+1}\setminus\Sigma_n$ yields
a saturated larger shell whenever those newly added agents also have no profitable deviation under 
the same ordering.
\end{proof}

Saturation implies every $\tau_k$ update inside $\Sigma$ is either a demand–shortfall or
bid–overtake event; both raise the marginal price they touch, propagating weakly upward along every
chain of the form \eqref{eq:ordering}.  A strict increase occurs whenever the winning bid or
reserve at some auction in $\Sigma$ is lifted.  Thus, local ``saturation'' is a best–response
property of a one–hop influence shell.

By an inductive test we extend saturation shell–by–shell.
\begin{corollary}\label{eq:propogated-shell}
The new shell $\Sigma_{n+1}$ inherits the price–ordering ladder \eqref{eq:ordering}: all its
marginal prices are no smaller than those in $\Sigma_n$; in particular,
    $p_*^{k}(\tau_{k+1})\ge p_*^{k}(\tau_k)$ for all $k\in\Sigma_{n+1}$.
\end{corollary}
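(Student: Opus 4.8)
The plan is to run an induction on the shell index $n$, feeding the two halves of Proposition~\ref{prop:sat-shell} into one another: part~(ii) propagates the saturation property outward by one hop, and part~(i) supplies the weak monotonicity of the affected prices. The base case $n=1$ is precisely the standing hypothesis, namely that $\Sigma_1=\Lambda^{(1)}_{\mcL}(j,t)$ is saturated and that each reserve $p_*^k(\tau_k)$ sits strictly inside its margin interval $(\overline p^k(\tau_k),\,\underline p^k(\tau_k))$; Lemma~\ref{lem:ladder} then certifies that the ladder \eqref{eq:ordering} holds on every admissible seller pair inside $\Sigma_1$.

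For the inductive step I would assume $\Sigma_n$ is saturated and already carries the ladder \eqref{eq:ordering}. First invoke Proposition~\ref{prop:sat-shell}(ii): since every agent first appearing in $\Sigma_{n+1}\setminus\Sigma_n$ has, by hypothesis, no profitable deviation given the preserved ladder, the enlarged shell $\Sigma_{n+1}$ is itself saturated. With $\Sigma_{n+1}$ now saturated, Lemma~\ref{lem:ladder} applies verbatim to any newly reached seller $k\in\Sigma_{n+1}\setminus\Sigma_n$: take as anchor a frontier seller $j'\in\Sigma_n$ and as bridge a buyer $\ell\in\Lambda_{\mcB}^{(1)}(i,t)\setminus\mcB^{j'}(t)$ linking $i\in\mcB^{j'}(t)$ to $k$, which yields the nested margin intervals \eqref{eq:interval-ordering} and hence the pointwise chain $p_*^k \le p_\ell^* < p_*^{j'} \le p_i^*$. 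Chaining these local ladders along the breadth-first expansion shows that $\Sigma_{n+1}$ inherits the full ordering \eqref{eq:ordering}.

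For the monotonicity assertions I would then apply Proposition~\ref{prop:sat-shell}(i) to the (now saturated) shell $\Sigma_{n+1}$: any local resolution step $\tau_k\to\tau_{k+1}$ that modifies the strategy space inside $\Sigma_{n+1}$ is, by saturation, either a demand-shortfall or a bid-overtake event, and in either case the clearing price and reserve at the affected seller can only move upward inside their margin intervals, with no ordering reversal. Specialising to seller $k$ gives $p_*^k(\tau_{k+1})\ge p_*^k(\tau_k)$ for every $k\in\Sigma_{n+1}$, which is the ``in particular'' clause. Iterating this weak monotonicity over the $\tau$-steps that accompany the growth from $\Sigma_n$ to $\Sigma_{n+1}$ then delivers the comparison ``no smaller than those in $\Sigma_n$'': the marginal and reserve prices that held while the shell was $\Sigma_n$ are only pushed up as the shell enlarges.

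I expect the main obstacle to be the verification that saturation genuinely propagates, i.e.\ justifying the hypothesis of Proposition~\ref{prop:sat-shell}(ii) that the freshly exposed agents in $\Sigma_{n+1}\setminus\Sigma_n$ have no profitable deviation once the inherited ladder is in place; this is where the global structure (exclusion--compensation together with strictly decreasing $\theta_i'$) must be used, rather than just the combinatorics of the projection operators, since otherwise the induction stalls at the first shell whose boundary contains an agent whose best response leaves the shell. A secondary, purely bookkeeping, point is to make explicit that ``no smaller than those in $\Sigma_n$'' is the \emph{temporal} comparison witnessed by the ``in particular'' clause, so that it follows from part~(i) and not from the spatial nesting of the ladder, which runs the opposite way with sellers farther from $j$ sitting strictly below.
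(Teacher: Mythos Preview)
Your argument is correct and, in its inductive scaffolding and its use of Proposition~\ref{prop:sat-shell}(i) for the temporal monotonicity clause, matches the paper. The one substantive difference is the tool you reach for when extending the ladder to the new layer $\Sigma_{n+1}\setminus\Sigma_n$: you first invoke Proposition~\ref{prop:sat-shell}(ii) to promote $\Sigma_{n+1}$ to a saturated shell and then re-apply Lemma~\ref{lem:ladder} to each newly exposed quadruple, whereas the paper bypasses this two-step detour and applies Proposition~\ref{prop:local-monotone} directly to each quadruple $(k,\ell,j,i)$ with $k\in\Sigma_{n+1}\setminus\Sigma_n$, reading off both ladder preservation and the weak price increase in one stroke. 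Your route is more explicit about where saturation is needed and, as you yourself note, surfaces the real obligation---that the freshly exposed agents have no profitable deviation---which the paper's appeal to Proposition~\ref{prop:local-monotone} tacitly assumes (since that proposition is stated for a saturated shell). The paper's route is shorter but leans on the same unverified hypothesis; neither proof actually discharges the obstacle you flag, so your caveat is well placed rather than a defect peculiar to your argument.
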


\begin{proof}
Consider any buyer–seller quadruple $(k,\ell,j,i)$ whose seller $k$ lies in the freshly revealed
layer $\Sigma_{n+1} \setminus \Sigma_n$.  Applying
Proposition~\ref{prop:local-monotone} to that quadruple shows that the ladder
inequality~\eqref{eq:ordering} is preserved and all four prices weakly increase.  Repeating this
argument for every such quadruple that touches $\Sigma_{n+1}$ completes the extension of the
monotone ladder one hop outward.  For every edge $(i,k)$ with $k\in\Sigma_n$,
Proposition~\ref{prop:sat-shell} guarantees that a local $\tau$–update cannot decrease the marginal
price $p_*^{k}$.  Hence $\big(p_*^{k}\big)_{k\in\Sigma_n}$ is component–wise
non–decreasing from $\tau_k$ to $\tau_{k+1}$.
\end{proof}

The asynchronous update model developed above provides the conceptual foundation for our
simulation studies.  It captures the essential features of decentralized PSP dynamics: sellers
acting independently on local information, buyers coordinating across overlapping auctions, and
influence propagating through partially ordered interactions.  In the following section, we use
these principles to construct an event–driven simulation framework that allows us to observe
how local saturation emerges in practice.

\section{Simulation Framework and Implementation}
\label{sec:simulation-framework}

This section summarizes the simulation code used to study the PSP
markets with multiple sellers and buyers. 
The simulation architecture explores the practical 
realizations of decentralized coordination. The event-driven approach reproduces 
the iterative best-response behavior implied by the mechanism and allows 
examination of convergence properties, price dispersion, and efficiency loss 
due to network coupling. 

Following Semret and Lazar \cite{Lazarsemret1998}, each buyer's valuation is given by a
parabolic curve of the form
$$
    \theta_i(z)=\kappa_i (\bar q_i - z/2) z\qquad\hbox{for}\qquad
         z\in [0,\bar q_i]
$$
where $\bar q_i$ represents the maximum quantity of goods desired and
$\kappa_i=\bar p_i/\bar q_i$ has dimensions marginal price per unit where
$\bar p_i$ is the maximum marginal value that buyer $i$ would ever place
on the resource.

\subsection{Event-Driven Algorithm and Asynchronous Updates}
The simulation operates as a discrete-time event system. Events are scheduled 
and processed in a priority queue, advancing the simulation clock $t$ to
the next event. Two event types exist:
\begin{enumerate}
    \item \textbf{Buyer Compute:} Buyer $i$ evaluates its local 
    state, computes updated bids $(z_i^j, p_i^j)$ on each connected 
    seller $j$, and schedules bid events when meaningful changes occur.
    \item \textbf{Post Bid:} Seller $j$ clears its auction, applying 
    second-price allocation and updating quantities, payments, and revenues.
\end{enumerate}
Buyer and seller events may reschedule each other (e.g., clears triggered after
meaningful bid changes). The loop halts when no effective changes remain or a
step limit is reached. The simplified pseudocode is shown in Algorithm~\ref{alg:psp-sim}.

\begin{algorithm}[H]
\caption{Event-driven PSP simulation}
\label{alg:psp-sim}
\begin{algorithmic}[1]
\State Initialize market state $M$; schedule all buyers.
\While{queue not empty and not converged}
  \State $(t, \mathrm{type}, \mathrm{payload}) \gets \mathrm{pop}()$
  \If{$\mathrm{type} = \mathrm{BUYER\_COMPUTE}$}
    \State Update $(z_i^j, p_i^j)$ for buyer $i$ on feasible links.
    \State Schedule POST\_BID events for affected sellers.
  \ElsIf{$\mathrm{type} = \mathrm{POST\_BID}$}
    \State Seller $j$ clears auction, enforcing $Q^j$, opponent ordering, and payments.
  \EndIf
\EndWhile
\end{algorithmic}
\end{algorithm}

Each buyer $i$ computes a uniform (or per-seller) bid price using
a valuation-based update $w=\theta_i'(\sum_j z_i^j)$.
Quantities are apportioned across incident sellers using a
local best-response step.
Buyers are sorted by descending unit price $p_{(n)}^j$. 
The clearing process accumulates allocations until total demand equals available resource $Q^j$.
The threshold price
\begin{equation}
p_*^j = \min \{ p_i^j : \sum_{k: p_k^j \ge p_i^j} q_k^j \ge Q^j \}
\end{equation}
identifies the seller's marginal (clearing) price.

For each seller $j$, the clear routine builds a partial ordering
by posted marginal prices (bid prices), serves opponents until
available resource $Q^j$ is exhausted, and charges the price incurred by 
the externality of participation to all
served buyers for that seller. The routine updates \texttt{a}, 
seller revenue, and per-buyer costs. 
We define the set of active buyers with positive bids as
\begin{equation}\label{eq:active-bids}
\mathcal{I}_j = \{ i : q_i^j > 0, \ a_{ij} \in \mbfA \},
\end{equation}
where $A$ represents the biadjacency
matrix captures direct buyer–seller interactions:
\[
\mbfA(t) \in {0,1}^{|\mcB|\times |\mcL|},
\]
where $\mbfA_{ij}(t)=1$ if buyer~$i$ bids on seller~$j$ at time~$t$, 
and $\mbfA_{ij}(t)=0$ otherwise. Rows of $\mbfA(t)$ identify each buyer’s active sellers;
columns identify all buyers bidding on a given seller.

Experiments are conducted using randomized networks of $I=|\mcB|$ buyers and $J=|\mcL|$ sellers. 
The connectivity matrix $\mbfA_{ij}$ determines which buyers may interact with which sellers. 
Each run uses the following protocol:
\begin{enumerate}
    \item Initialize market state $M$ with parameters $(I, J, Q^j, \varepsilon, \mathrm{reserve})$.
    \item Assign buyer valuations $(\bar q_i, \kappa_i)$ and budgets $b_i$ from uniform ranges.
    \item Generate random biadjacency matrices with varying density (percentage of shared buyers).
    \item Execute the event-driven simulation for a fixed iteration limit.
    \item Record convergence statistics, prices, allocations, and revenues.
\end{enumerate}
The event scheduler supports both deterministic and stochastic updates, allowing controlled 
comparison between synchronous and asynchronous dynamics.

\paragraph*{Experimental Setup.}
Each experiment initializes a market with $I$ buyers and $J$ sellers.
Seller capacities are fixed at $Q_{\max} = [60.0, 40.0]$, with buyers
distributed across both sellers according to a connectivity percentage that
varies from $0\%$ (fully isolated) to $100\%$ (fully connected) in increments
of $10\%$. For each connectivity level, a base random seed
(\texttt{base\_seed = 20405008}) ensures reproducibility while allowing
controlled stochastic variation across runs.

Following Semret and Lazar \cite{Lazarsemret1998}, each buyer's valuation is given by a
parabolic curve of the form
$$
    \theta_i(z)=\kappa_i (\bar q_i - z/2) z\qquad\hbox{for}\qquad
         z\in [0,\bar q_i]
$$
where $\bar q_i$ represents the maximum quantity of goods desired and
$\kappa_i=\bar p_i/\bar q_i$ has dimensions marginal price per unit where
$\bar p_i$ is the maximum marginal value that buyer $i$ would ever place
on the resource.
Note that $\kappa_i$ is larger for buyers who derive more
value from the resource.  Now choose $\bar q_i$ and $\bar p_i$
independently for all $i$ such that
\begin{equation}\label{generousd}
    \bar q_i \sim U[50,100]\qquad\hbox{and}\qquad \bar p_i\sim U[10,20],
\end{equation}
where $U[a,b]$ represents the uniform distribution over the interval $[a,b]$.
Noise and perturbation effects are controlled by $\epsilon = 2.5$.
For each seed and
connectivity level, the simulation executes until convergence, measuring
clearing prices, allocations, and bid prices.

A sequence of derived seeds
$\texttt{seed} = \texttt{base\_seed} + s$ is used for each connectivity
level $s$, ensuring comparable random draws while preserving independence
across runs.

\subsection{Price Ladder Verification}

The simulation presented here focuses on verifying the \emph{price ladder condition} across 
interconnected sellers. This experiment represents a localized instance of the broader PSP
market, designed to test whether clearing prices obey a monotonic relationship when 
sellers share buyers through overlapping influence sets.

The experiment initializes a small market composed of two sellers ($j=1$ and $\ell=0$)
and four buyers ($i=0,1,2,3$). The adjacency structure allows some buyers to 
connect to both sellers, while others remain local. Sellers have distinct 
capacities, $Q^1=8$ and $Q^0=15$, reflecting asymmetric market sizes. 
The buyer valuation and bid initialization follow:
\begin{align*}
(0,1):& \; q=8,\; p=40, \qquad &(0,0):& \; q=8,\; p=40, \\ 
(1,0):& \; q=2,\; p=4, \qquad &(2,0):& \; q=6,\; p=1.
\end{align*}
We have the connectivity of the market,
\begin{figure}[h!]
    \centerline{\begin{minipage}[b]{0.75\textwidth}
    \caption{\label{fig:adj_lad}Adjacency structure showing market connectivity between buyers and sellers.}
    \end{minipage}}
    \centerline{\includegraphics[height=0.35\textwidth]{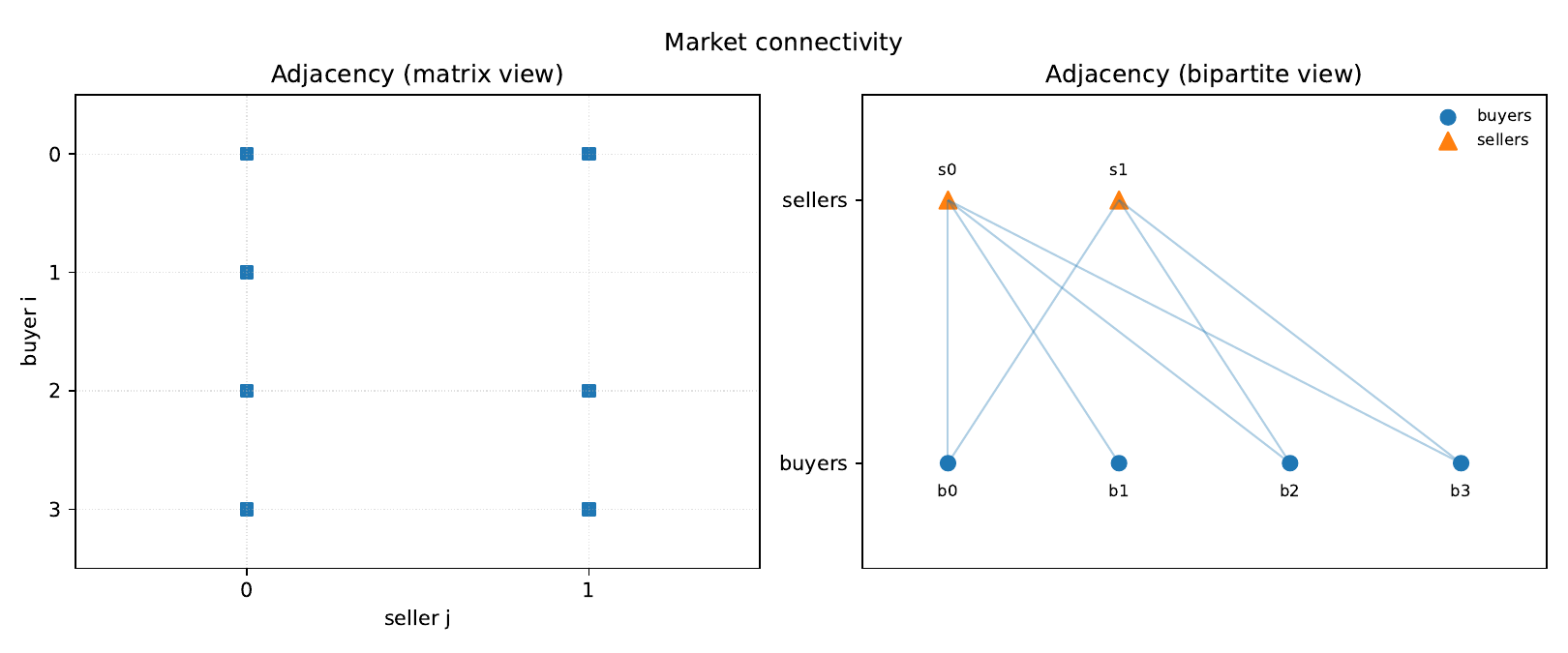}}
\end{figure}
in alignment with Lemma~\ref{lem:ladder}.

The resulting market has a 
highly skewed valuation distribution, allowing one buyer to dominate
both sellers, while others form the marginal tiers that define
second-price boundaries. 

The algorithm scans all sellers and their one-hop neighbors to 
evaluate tuples $(\ell, k, j, i)$ where Buyer~$i$ connects 
the two sellers. It tests the three inequalities defining the ladder ordering 
$p^*_\ell <  p_k < p^*_j \le p_i$.
If these inequalities hold for all tuples, the market
satisfies the monotone price ladder condition. Violations are 
reported with detailed tuple traces to aid in diagnosing market inconsistencies.

In this configuration, the ladder tuples satisfy all three 
inequalities, confirming a monotone relationship among clearing 
and bid prices. The system outputs a detailed report including,
number of valid tuples and unique seller pairs $(j,\ell)$,
margins between successive price tiers: 
  $(p_k - p^*_\ell)$, $(p^*_j - p_k)$, and $(p_i - p^*_j)$,
and a summary of any violations detected.
For this experiment, the output indicates no violations and consistent
monotonicity, demonstrating that the PSP clearing mechanism maintains
a globally ordered price structure when local competition and influence overlap exist.

This controlled experiment provides an analytical validation of the price 
ladder lemma in a simplified setting, and is intended to act as a 
unit test. It confirms that bid prices 
across connected sellers obey the expected inequalities implied 
Lemma~\ref{lem:ladder}. 
More generally, it shows that when buyers bridge multiple
sellers, the second-price mechanism induces a coherent ordering of
marginal prices, and provides an analytical tool for
extending this verification to larger graphs. 
\begin{figure}[h!]
    \centerline{\begin{minipage}[b]{0.75\textwidth}
    \caption{\label{fig:b0vcurvelad}Buyer~0 valuation curve and marginal diagnostics.}
    \end{minipage}}
    \centerline{\includegraphics[height=0.4\textwidth]{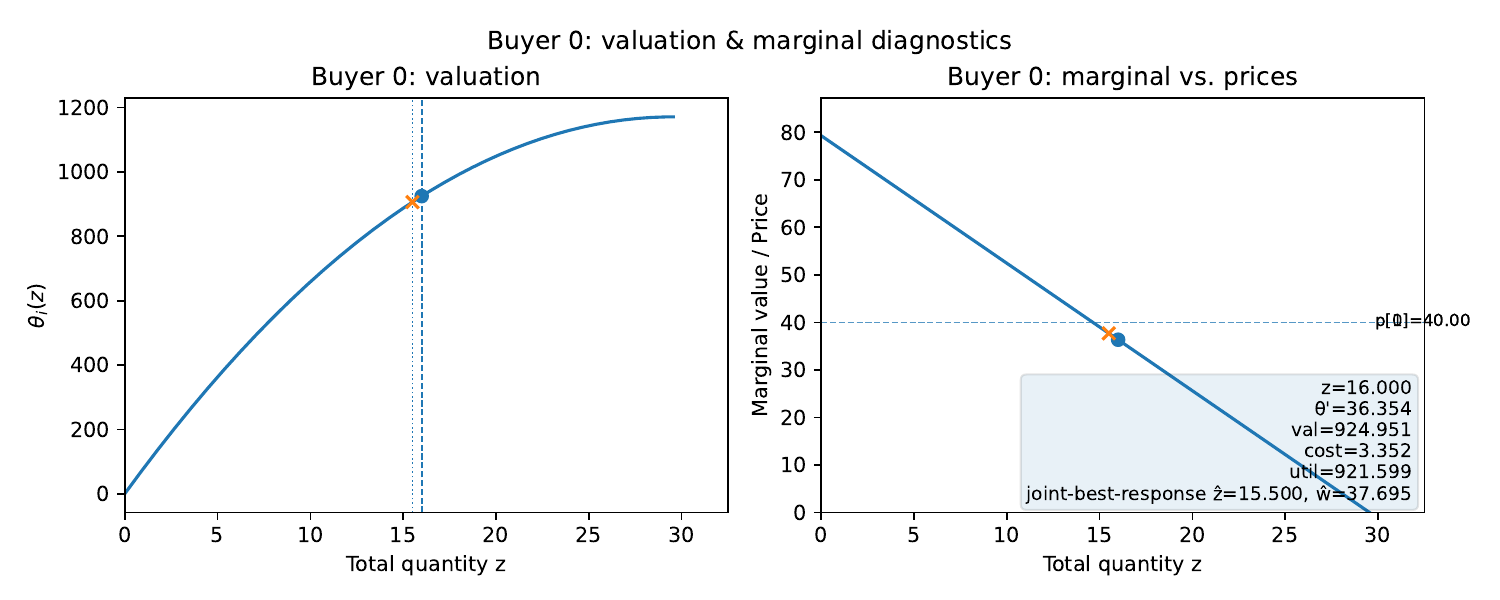}}
\end{figure}
In the tuples $(\ell,k,j,i)=(0,1,1,0)$ and $(0,2,1,0)$ we observe
\[
  p_\ell^*=1.0,\quad p_k\in\{4.0,1.0\},\quad p_j^*=40.0,\quad p_i=40.0.
\]
Thus the high-tier buyer at seller~$j$ sits \emph{at} the clearing price, while mid-tier competitors remain strictly below $p_j^*$. The reported margins
$(p_k-p_\ell^*)=0$, $(p_j^*-p_k)=36$, and $(p_i-p_j^*)=0$
reveal a wide central gap: a single dominant tier clears seller~$j$, whereas seller~$\ell$ is anchored by low-tier participation at a much smaller price.

The monotone relationship validated here provides empirical confirmation of Lemma~\ref{lem:ladder}. 
The experiment illustrates how buyers bridging sellers stabilize the market through consistent 
price ordering, even when capacities and bid magnitudes differ substantially. The asymmetry in seller
revenues and capacities demonstrates how equilibrium adapts to network structure, with high-valuation
buyers dominating smaller auctions and lower-tier participants anchoring larger ones.

Our next experiment allows us to observe the
propagation of equilibrium constraints across overlapping influence shells, 
offering empirical evidence for Propositions~\ref{prop:sat-shell} of this paper.

\subsection{Connectivity}

Further experimental results are aggregated as functions of the overlap percentage
between buyer--seller pairs, 
revealing how market interdependence affects stability, bid prices, 
and efficiency. The framework also enables sensitivity analysis under perturbations 
to parameters such as $\varepsilon$, budget distributions, and the structure of influence sets.

In this experiment, connectivity was gradually increased to observe 
how equilibrium formation and price alignment change as the market 
transitions from isolated to coupled seller networks. Starting 
from a sparse adjacency structure, buyers were allowed to 
participate in multiple auctions, creating overlaps that 
induced cross-seller influence and coupling of price dynamics.

Figure~\ref{fig:adj} illustrates the adjacency structure used
in the experiment. Connectivity defines the feasible market domain 
$I_{\mathrm{active}}(t)\subset \mcB\times \mcL$
that bounds all strategic interactions. 
\begin{figure}[h!]
    \centerline{\begin{minipage}[b]{0.75\textwidth} 
\caption{\label{fig:adj}Adjacency and market connectivity 
for the 8\,$\times$\,2 experiment. Connectivity is set at 50\%.}
    \end{minipage}}
    \centerline{
        \includegraphics[height=0.4\textwidth]{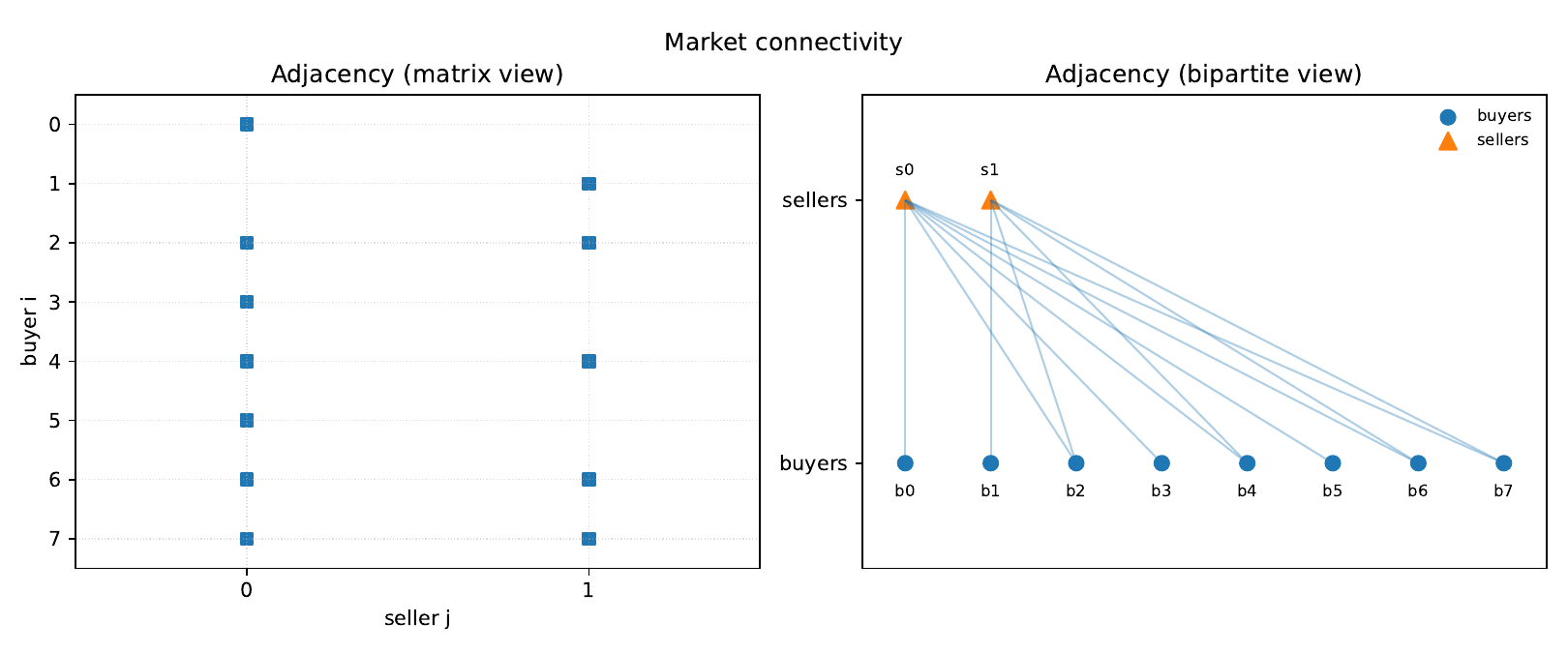}
    }
\end{figure}

Figure~\ref{fig:ub6s0} shows the utility surface for a single buyer–seller pair,
as was presented in~\cite{Lazarsemret1998} , 
here buyer $6$ and seller $0$, plotted over bid quantity $z_i^j$ and price $w_i^j$. 
The surface depicts the buyer’s instantaneous utility 
$u_i(z_i^j,w_i^j)=\theta_i(z_i^j)-z_i^jw_i^j$ given the opponent bids and current market
reserve. The concave ridge indicates the buyer’s optimal quantity at the current 
price level, while the lower regions show diminishing returns and cost-dominated outcomes.
We see a stable interior 
optimum: movements along the quantity axis correspond to allocation changes, whereas movements 
along the price axis reflect valuation gradients.
\begin{figure}[h!]
    \centerline{\begin{minipage}[b]{0.75\textwidth} 
\caption{\label{fig:ub6s0}%
Single buyer--seller utility surface for buyer~6 at seller~0. The surface plots
$u_i(z_i^j,w_i^j)=\theta_i(z_i^j)-z_i^j w_i^j$ over quantity $z_i^j$ and unit price $w_i^j$,
holding the opposing bids fixed at the snapshot.}
    \end{minipage}}
    \centerline{
        \includegraphics[height=0.5\textwidth]{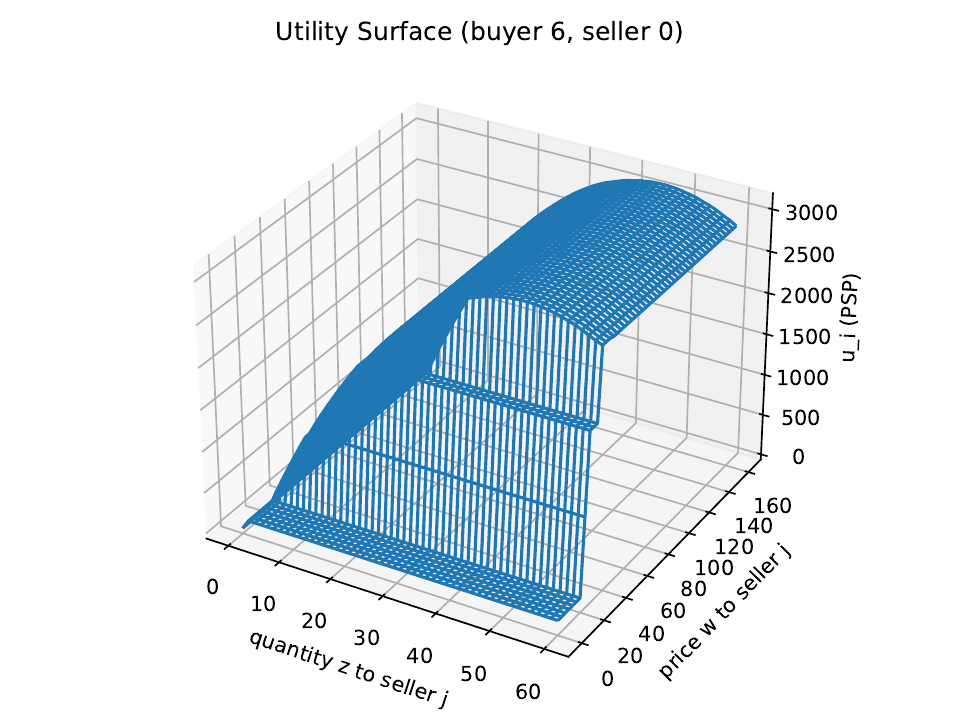}
    }
\end{figure}

Next, we present an algorithm: an iterative evaluation of the
aggregate staircase $P_i(z,s_{-i})$.
At each iteration $t$, buyers and sellers perform the following operations:
\begin{algorithm}[ht]
\caption{Buyer Update Dynamics under Bounded Participation}
\begin{algorithmic}[1]
\State \textbf{Bid formation.} 
Each buyer $i$ applies the opt--out map 
$q_i(a(s):\mcL_i(t)) = [\ q_i^j(a)\ ]_{j\in\mcL_i(t)}$
and selects the minimal--cost subset of sellers.
\State \textbf{Utility evaluation \& Rebid.} 
Buyer $i$ computes the utility increment 
$\Delta u_i(t)$ from its updated bids.
Buyer $i$ updates its bids
iff $\Delta u_i(t)>\epsilon$.
\State \textbf{Allocation and clearing.}
Sellers allocate proportionally at each price $p_*^j(t)$.
Buyers at the cutoff price may receive partial allocations.
\State \textbf{Advance iteration.}
Set $t \gets t + 1$.
\end{algorithmic}
\end{algorithm}

\noindent
Because each accepted update increases some buyer’s utility by a bounded discrete
amount and the state space is finite, every sequence of threshold--improving updates
must terminate in an absorbing $\epsilon$--NE region.
We speculate that the induced dynamics
are weakly acyclic: from any initial state, at least one finite improvement path
leads to equilibrium.

Figure~\ref{fig:buyer6-diagnostics} are produced by the above construction. 
For Buyer~6 we evaluate the staircase $P_i(\cdot,s_{-i})$, compute $(q_i,w_i)$, 
perform the minimal–cost fill, and then read off the realized total $Z_i$ 
and price $p^*:=P_i(Z_i,s_{-i})$. The left panel shows
$(Z_i,\theta_i(Z_i))$ on the concave valuation curve; the right panel
shows $\theta_i'$ together with the dashed price level $p^*$. In the runs shown, 
the valuation is quadratic,
$$
  \theta_i(z) \approx a z - \tfrac{b}{2} z^2,\quad
  \theta_i'(z) = a - b z,\quad a\approx 66,\ b\approx 1.1,
$$
so marginal value declines approximately linearly from $\theta_i'(0)\approx 66$ 
to near zero around $z\approx 60$. Two sellers induce a two–step staircase in 
$P_i$; the three snapshots correspond to marginal price levels near $p^*\approx 32.1$ 
with $Z_i\approx 28.2$ (interior), $p^*\approx 28.6$ with $Z_i\approx 28.3$
(price–limited), and a high–availability case with $Z_i\approx 52.2$
where the buyer is constrained by feasibility at that price. These values 
are taken directly from the algorithm’s output and no post–hoc smoothing is applied.

\begin{figure*}[t]
  \centering

  \begin{subfigure}{0.45\textwidth}
    \centering
    \includegraphics[width=\linewidth]{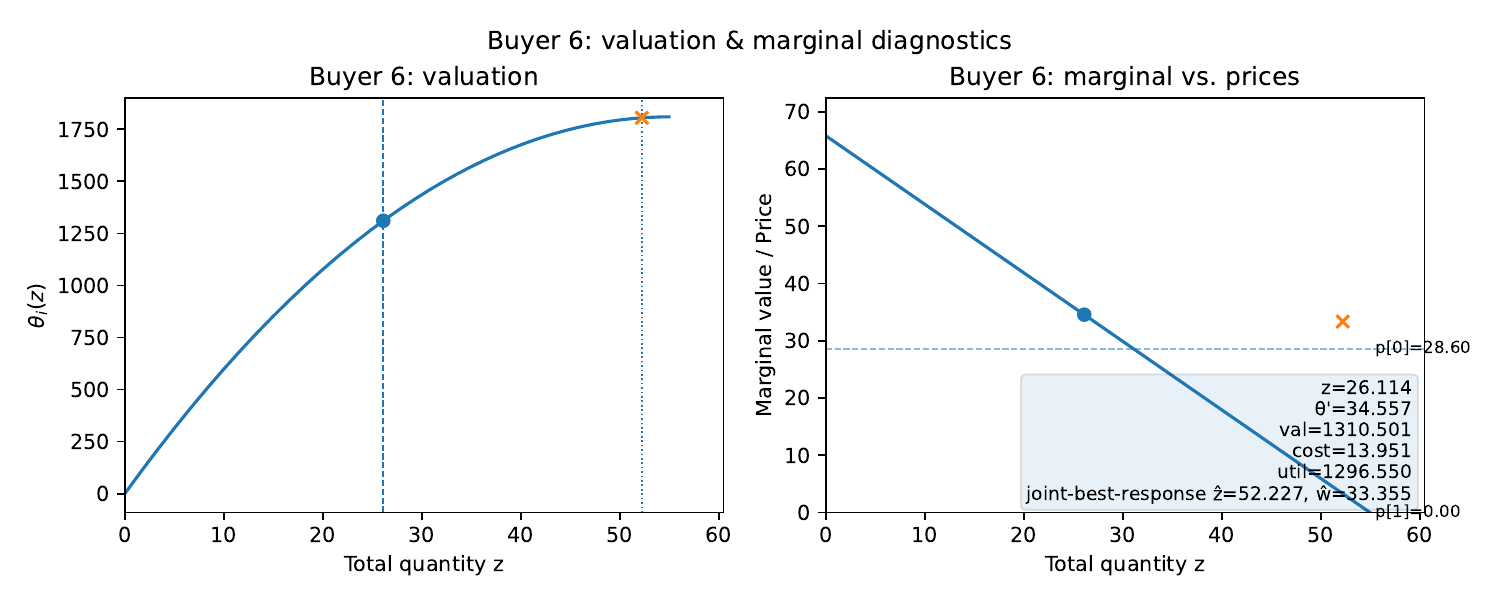} 
    \caption{Constraint-limited: the joint best response lies on a feasibility boundary; $\theta_i'(Z_i)>p^*$ so the buyer would expand if capacity at $p^*$ were available.}
    \label{fig:jbr-constraint}
  \end{subfigure}\hfill
  %
  \begin{subfigure}{0.45\textwidth}
    \centering
    \includegraphics[width=\linewidth]{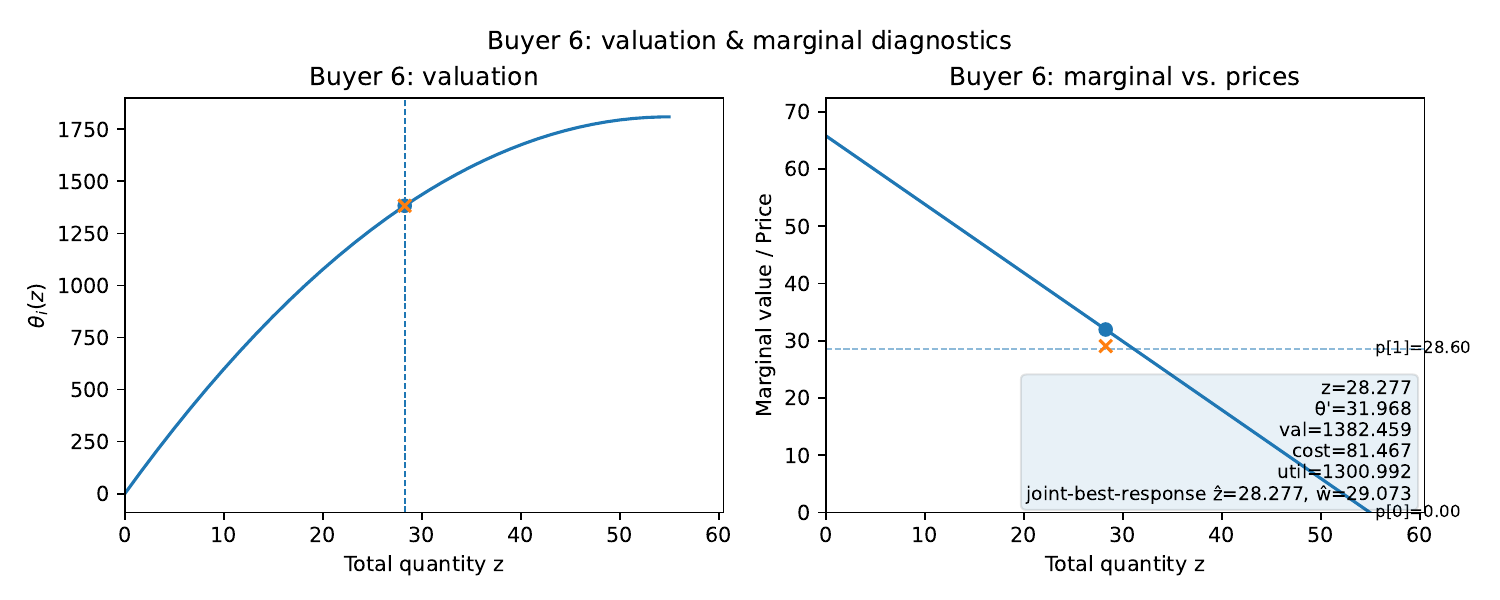} 
    \caption{Interior optimum: here $\theta_i'(Z_i) > p$, so the buyer would expand if possible.
    The joint–best–response is the point where $\theta_i'(z) = p$.}
    \label{fig:jbr-price}
  \end{subfigure}

  \caption{Buyer-level diagnostics under the Progressive Second Price (PSP) 
  joint best response. Each panel shows the valuation $\theta_i(z)$ with
  the realized point $(Z_i,\theta_i(Z_i))$ and the marginal 
  curve $\theta_i'(z)$ with a dashed line at $p^*$, illustrating the 
  transition from constraint-limited to price-limited behavior along improvement paths.}
  \label{fig:buyer6-diagnostics}
\end{figure*}

When the orange marker in the marginal panel lies above the dashed line, 
the realized point satisfies $\theta_i'(Z_i)>p^*$; the buyer would
buy more at the prevailing price, but the minimal–cost fill has saturated
feasible capacity at that price, so the joint best response is attained 
on a boundary of the feasible region rather than at marginal equality.
When the marker sits on the dashed line, $\theta_i'(Z_i)=p^*$ holds 
and the allocation is locally efficient; here the construction returns an
interior maximizer of $U_i(z)=\theta_i(z)-C_i(z)$. When the marker lies
below the dashed line, $\theta_i'(Z_i)<p^*$ and any further increase
in quantity would decrease utility; the best response is therefore at
or near a participation boundary even though the valuation point on the
left panel is well inside the curve. In every case the left panel places
the realized point on $\theta_i(\cdot)$ because the algorithm maximizes 
value minus payment over the compact feasible set under the current price.

\begin{table}[h!]
\centering
\caption{Buyer regimes and their economic interpretation.}
\begin{tabular}{@{}lll@{}}
\toprule
\textbf{Regime} & \textbf{Relation} & \textbf{Economic meaning} \\ \midrule
{Constraint-limited} & $\theta_i'(z^*) > p^*$ & Supply prohibitive. \\
{Equilibrium (interior)} & $\theta_i'(z^*) = p^*$ & Marginally efficient allocation. \\
{Price-limited} & $\theta_i'(z^*) < p^*$ & Cost prohibitive. \\ \bottomrule
\end{tabular}
\end{table}

Figure~\ref{fig:usurfaceshare} extends the analysis to the joint 
allocation space of the two sellers, 
each point on the surface corresponds to a feasible
distribution across sellers $0$ and $1$ under a uniform price
$w=\theta_i'(Z_i)$.
The height of the surface indicates utility under the split given the opposing bids present in
the snapshot. The ridge along constant $Z_i$ 
identifies the efficient split between sellers:
solutions on the plateau indicate a local optimum; as the solution shifts 
below a ridge the buyer could improve utility by increasing its bid quantity,
and the solution shifts toward the seller facing weaker opposing demand. 
Therefore, even with a uniform price 
tied to $Z_i$, the allocation decision 
remains two-dimensional due to how opponent demand and residual available resource shape the intersection
of feasible pricing and allocations.
\begin{table}[h!]
\centering
\caption{Interpretation of ridges in the buyer's utility surface.}
\begin{tabular}{@{}ll@{}}
\toprule
\textbf{Ridge type} & \textbf{What it corresponds to} \\ \midrule
Sharp rise in $u_i$ & A new seller step becomes available (increase in supply). \\
Sharp drop & Another buyer’s bid dominates $\rightarrow$ PSP second-price step kicks in. \\
Plateau & Both sellers saturated or prices equalize (local equilibrium). \\ \bottomrule
\end{tabular}
\end{table}

\begin{figure}[h!]
    \centerline{\begin{minipage}[b]{0.75\textwidth} 
\caption{\label{fig:usurfaceshare}%
Shared-seller utility surface where buyer~6’s 
utility is a function of total requested quantity $Z_i=z_0+z1$;
$u_i(z_0, z_1) = \theta_i(Z_i) - w(z_o, z_1)(Z_i)$
and $w=\theta'(Z_i)$: the feasible participation surface.
}
    \end{minipage}}
    \centerline{
        \includegraphics[height=0.5\textwidth]{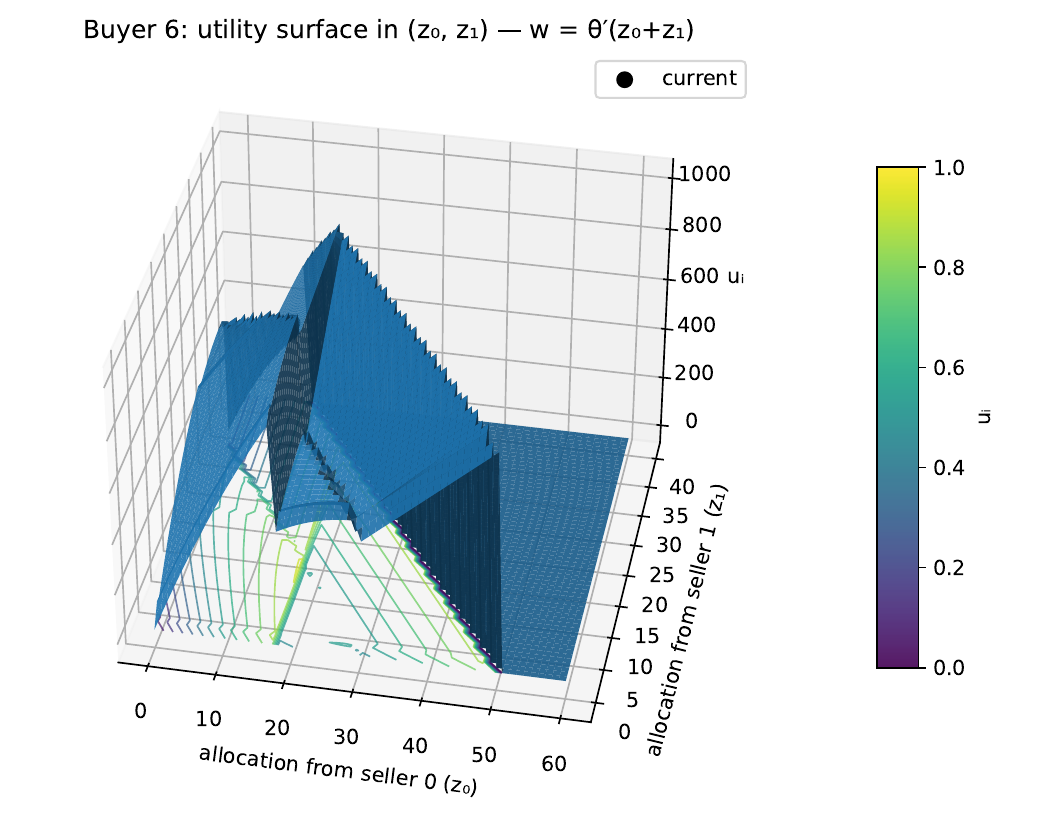}
    }
\end{figure}

To summarize our results, seller~0, with
greater available resource ($Q_{\max} = 60$), cleared at a slightly higher 
price $p_0^* = 30.084$ than Seller~1 ($p^*_1 = 28.597$). Despite this asymmetry,
both sellers exhibited similar expected revenues ($E_0 = 44.76$, $E_1 = 31.16$)
and low variance, attributed to 40\% of buyers participating in both markets. 
The shared influence among these buyers synchronized seller behavior,
leading to a nearly uniform price surface.

Buyer-level data shows that bridging buyers---particularly buyers~6 and~7---maintained
bids across both sellers with marginal valuations $(32.118, 33.355)$ close to Seller~0’s 
clearing price. Their dual participation enforced cross-market coherence, 
ensuring that no single auction could deviate significantly from the shared
equilibrium. Buyers at the margins (e.g., 1 and 2) contributed to shaping 
intermediate prices, stabilizing the monotone progression across tuples.

Furthermore, at each equilibrium or stopping point, the following quantities were collected,
\[
    E_j=\frac{1}{A_j}\sum_{i\in{\cal B}^j} a_i^j p_i
    \quad\hbox{and}\quad
    V_j=\frac{1}{A_j}\sum_{i\in{\cal B}^j} a_i^j (p_i-E_j)^2
    \quad\hbox{where}\quad
    A_j=\sum_{i\in{\cal B}^j} a_i^j
\]
where $E_j$ is the expected seller revenue, $V_j$ is the 
variance of seller revenue across realizations. In addition, buyer classification 
(winners, zero allocation, opt-out behavior) and network statistics 
(fraction of shared buyers, influence overlap) are collected.

Figure~\ref{fig:Ej} shows that as connectivity increases, both sellers exhibit convergence 
in marginal value and clearing price. Seller~0 maintains a higher marginal 
value throughout due to its larger available resource ($Q_j = 60$), but the gap between 
sellers narrows with increasing overlap, confirming that multi-auction buyers 
mediate price synchronization across markets. The alignment of $E(p_i)$ and 
$p^*$ demonstrates how shared influence accelerates equilibrium formation and 
reduces price dispersion.
\begin{figure}[h!]
  \centering
  \includegraphics[width=0.6\textwidth]{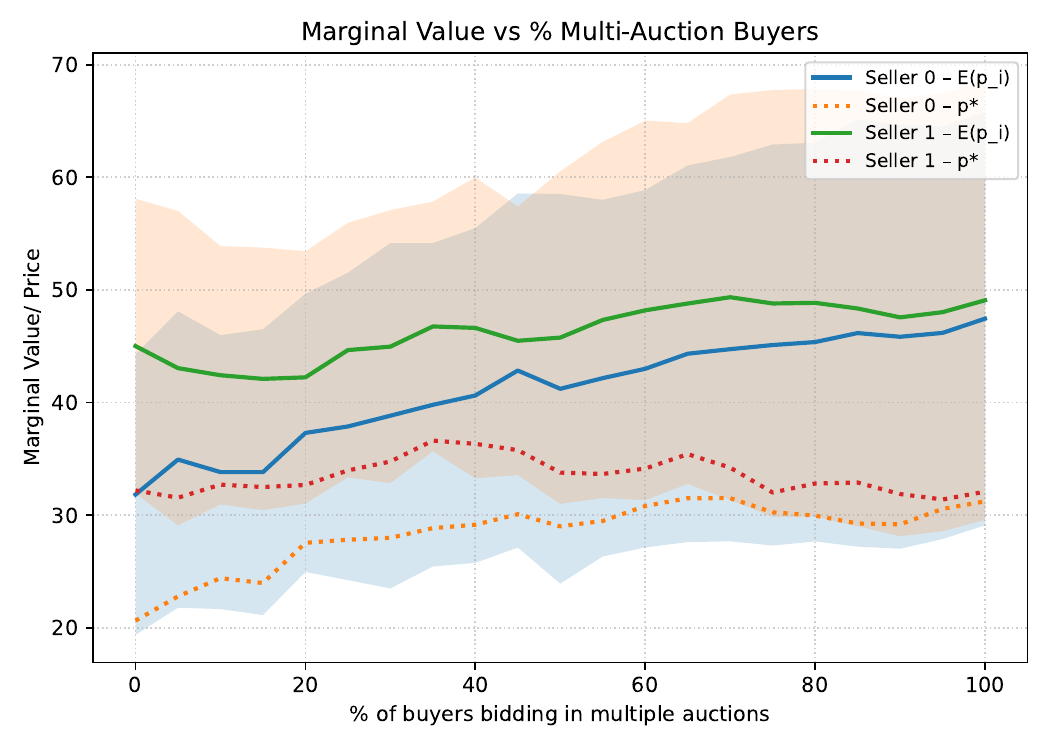}
  \caption{Marginal value versus the percentage of buyers participating in 
  multiple auctions.}
  \label{fig:Ej}
\end{figure}

Overall, increasing connectivity transforms the market from a set of
independent price islands into a unified utility surface. Sparse
configurations produce local equilibria with greater variance between sellers, 
while denser networks encourage faster convergence through influence
propagation. These findings validate the theoretical expectation that
shared influence sets promote global coordination and equilibrium 
alignment.

\paragraph*{Statement on Supplementary Material}
The code for the experiments presented in this paper can be found at:
\begin{itemize}
\item https://github.com/jkblazek/arXiv-2511.19225
\end{itemize}

\section{Conclusion and Future Work}\label{sec:conclusion}

This paper establishes a graph-theoretic framework for analyzing Progressive Second-Price (PSP) 
auctions, connecting decentralized market dynamics to structural properties of influence graphs.
We formalized and expanded the concepts of influence sets and saturation, which together bound strategy spaces
deterministically and ensure stable, truthful convergence in decentralized settings.

Our analysis relies on two levels of saturation, linked by the partial‐ordering 
structure of bids; local saturation is a set–level best–response property.
Establishing a formal fixed-point characterization of this process--perhaps using
lattice or order-theoretic methods--remains an important direction for future work.

Our present analysis instead focuses on the constructive evolution of influence shells
and the preservation of local monotonicity. Specifically, our approach demonstrates 
how recursive expansions of influence sets reveal market interactions across successive 
auction rounds. By introducing intra-round resolution via the $\tau_k$ steps,
we provide a finer-grained analytical tool to model the internal dynamics of
auction iterations, clarifying the subtle interactions between buyer strategies 
and seller pricing rules.

The establishment of monotonicity in bid updates via induced partial ordering 
ensures stable, non-oscillatory convergence under realistic, elastic valuation
conditions. Our framework provides a robust method to anticipate market shifts, 
characterize equilibrium thresholds, and ensure consistent propagation of 
influence across dynamic network topologies.

Future research will explore several promising extensions in reserve price 
optimization. Could there be an optimal coordinated reserve vector, 
chosen using buyer feedback, that upholds Lemma~\ref{lem:ladder}?
We start by defining an admissible reserve price
region, where for fixed $t$, the admissible set of reserve profiles is
\[
  R(t) = \big\{p_* \in \mathbb{R}^J : p_*^j \in \big(\overline p^j(t),\, \underline p^j(t)\big)
  \ \forall j,\ \text{and Lemma~\ref{lem:ladder} holds}
  \big\}.
\]

Thus, we may determine the existence of \emph{coordinated reserve profile},
where, 
given seller-side weights $w_j \ge 0$ at time $t$, is defined as
any $p_*^{\mathrm{coord}}(t)\in R(t)$ that maximizes
\[
  \Phi(p_*) = \sum_{j=1}^J w_j Q^j(t)\ p_*^j
\]
over $R(t)$.
We conjecture that at least one
coordinated reserve profile $p_*^{\mathrm{coord}}(t)$ exists for every $\epsilon>0$.
Moreover, every such profile preserves the interval ladder
inequalities~\eqref{eq:interval-ordering} and hence is consistent with
local saturation of primary influence shells.

A formal proof of a best–response property
is beyond the scope of this work at this time. Instead, we provide a sketch 
of the proof that would demonstrate the existence of a joint $\epsilon$-best
reply for a buyer participating in multiple concurrent auctions.

To form a coordinated reply at a common marginal price, we collect the sellers
visible to buyer~$i$ under $s_{-i}$ and compute their prices
at a target marginal value $w_i$. Ordering these sellers by nondecreasing 
price and applying tentative allocation until the requested total is reached yields the 
minimal–payment split across auctions. Denote our buyer-level payment by
$$
  P_i(z,s_{-i}) := \inf\{\,y : Q_i(y,s_{-i}) \ge z\,\}, 
$$
which we call an aggregate price staircase. The cumulative payment is
$$
  C_i(z;s_{-i}) := \int_0^z P_i(\zeta,s_{-i})\,d\zeta .
$$
Knowing by finite-valuation that $P_i(\cdot,s_{-i})$ is bounded, 
nondecreasing, and piecewise constant, this
construction implements exactly the payments returned by PSP at the target
marginal price and, among all feasible potential allocations with the same total quantity, 
attains the minimum payment.

First, we aggregate availability across auctions at a common marginal price by 
$P_i(z,s_{-i})=\inf\{y:Q_i(y,s_{-i})\ge z\}$. Finiteness in the number of bids ensures
boundedness and right–continuity; the plateau condition 
$Q_i(y^-,s_{-i})<z\le Q_i(y,s_{-i})$ characterizes $P_i(z,s_{-i})=y$. 
The cumulative payment $C_i(z)=\int_0^z P_i(\zeta,s_{-i})\,d\zeta$ is
continuous and convex. Consider
$$
  U_i(z; s_{-i}) := \theta_i(z) - C_i(z; s_{-i}).
$$
Realize $Z_i$ by sorting tiers by effective PSP price at level
$p_i=\theta_i'(Z_i)$ and taking quantities until the sum equals 
$Z_i$; PSP returns the same aggregate staircase, hence the same total. 

Finally, integrating 
stochastic perturbations and noise into the PSP framework will deepen the
realism of the model, allowing exploration of market resilience under uncertainty.
Additionally, applying resistance distance~\cite{Barrett2017} to the 
reserve profiles and diffusion-based influence models could yield
deeper insights into influence propagation and market stability.  
Empirical validation through simulation and real-world decentralized applications, 
such as spectrum and bandwidth auctions, will be critical to validate and
refine theoretical predictions and improve practical mechanism design.

\bibliography{bib}

\begin{appendix}

\section{Market Shift Revealed by Partial Ordering}\label{appendix:example}

\begin{example}[Market Shift Revealed by Partial Ordering]\label{ex:market_shift}

In this example we model a simple reactive reserve update,
$$  p_*^j(t+1)=\max\ \Big\{
    p_*^j(t),\ \max_{i\notin\mcB^j(t)} p_i^j(t) + \epsilon\Big\},
$$
with $\epsilon>0$ providing strict improvement for convergence.
Thus the seller always keeps its internal bid strictly above the
highest losing buyer, and the reserve price is nondecreasing in $t$.
Alternative
clearing–price rules that set $p_*^j$ to the threshold $\chi^j(t)$ are equivalent
for our results.

Consider a PSP auction market with two sellers $L_1, L_2$ and four buyers $B_3, B_4, B_5, B_6$. Initial buyer-seller connections are represented by the adjacency matrix:

$$
\mbfA^{(0)} = 
\begin{array}{c|cc}
 & L_1 & L_2 \\
\hline
B_3 & 1 & 0 \\
B_4 & 1 & 1 \\
B_5 & 1 & 1 \\
B_6 & 0 & 1 \\
\end{array}
$$

\vspace{0.3em}\noindent
\textbf{Auction Iteration $t=1$}

\textbf{Seller $L_1$} initially receives bids from $B_3, B_4, B_5$. Suppose initial bid prices are ordered as follows:
$$
p_{B_3}^{L_1}(1)=2.0 > p_{B_4}^{L_1}(1)=1.8 > p_{B_5}^{L_1}(1)=1.5.
$$

Progressive allocation steps for $L_1$:
\begin{itemize}
\setlength{\itemindent}{2em}
\item[$\tau_1$]: $B_3$ allocated requested quantity, pays second-highest price $1.8$. Seller updates reserve to $1.8+\epsilon$.
\item[$\tau_2$]: $B_4$ allocated next available quantity, pays $1.5$. Reserve updates to $1.5+\epsilon$.
\item[$\tau_3$]: $B_5$ receives allocation, pays reserve $(1.5+\epsilon)$.
\end{itemize}

\textbf{Seller $L_2$} has bids from $B_4, B_5, B_6$, with initial ordering:
$$
p_{B_5}^{L_2}(1)=1.9 > p_{B_4}^{L_2}(1)=1.7 > p_{B_6}^{L_2}(1)=1.4.
$$

Progressive allocation steps for $L_2$:
\begin{itemize}
\setlength{\itemindent}{2em}
\item[$\tau_1$]: $B_5$ allocated, pays second-highest price $1.7$. Reserve updates to $1.7+\epsilon$.
\item[$\tau_2$]: $B_4$ allocated next, pays $1.4$. Reserve updates to $1.4+\epsilon$.
\item[$\tau_3$]: $B_6$ allocated, pays reserve $(1.4+\epsilon)$.
\end{itemize}

\vspace{0.3em}\noindent
\textbf{Partial Ordering and Initial Influence Sets}
Initially, influence projections:
$$
\pi \circ \varpi^{-1}(L_1)=\{B_3,B_4,B_5\},\quad \pi \circ \varpi^{-1}(L_2)=\{B_4,B_5,B_6\}
$$
Both sellers share buyers $B_4,B_5$, forming an integrated influence structure.

\vspace{0.3em}\noindent
\textbf{Market Shift at $t=2$: Buyer $B_4$ increases bid on $L_1$}
Buyer $B_4$ increases their bid on seller $L_1$ to overtake $B_3$:
$$
p_{B_4}^{L_1}(2) = p_{B_3}^{L_1}(1)+\delta, \quad \delta>0.
$$

This triggers an immediate, asynchronous allocation decision at seller $L_1$:
\begin{itemize}
\setlength{\itemindent}{2em}
\item[$\tau_1$]: Seller $L_1$ allocates to the new highest bidder $B_4$, who pays the second-highest bid $p_{B_3}^{L_1}(1)$. Reserve updates accordingly.
\end{itemize}
The new partial ordering on $L_1$:
$$
p_{B_4}^{L_1}(2) > p_{B_3}^{L_1}(2) > p_{B_5}^{L_1}(2).
$$

\vspace{0.3em}\noindent
\textbf{Coupled Buyer Rebid}
Buyer $B_4$ observes this new allocation outcome and immediately updates their residual demand. Since buyers maintain consistent bid strategies across all sellers, buyer $B_4$ must now adjust their bid quantity for seller $L_2$ simultaneously:
$$
\sigma_{B_4}^{L_2}(\tau_2) = Q_{B_4}(2) - a_{B_4}^{L_1}(\tau_1),
$$
and submits this updated bid quantity at price $p_{B_4}^{L_2}(2)$.

Seller $L_2$, asynchronously and independently from $L_1$, now processes this rebid at its next local step:
\begin{itemize}
\setlength{\itemindent}{2em}
\item[$\tau_2$]: Seller $L_2$ allocates quantity to buyer $B_4$, charging the next-highest price among competing bidders (e.g., buyer $B_5$'s previous bid).
\end{itemize}

\vspace{0.3em}\noindent
\textbf{Propagation of Influence via Projection Mappings:}
The shift at $L_1$ updates the projection and partial ordering structure, immediately affecting the shared buyer set with seller $L_2$. The updated projections remain:
\begin{align*}
\pi \circ \varpi^{-1}(L_1)&={B_3,B_4,B_5}, \\
\pi \circ \varpi^{-1}(L_2)&={B_4,B_5,B_6},
\end{align*}
\noindent but buyer $B_4$'s strategic rebid triggers a recomputation of reserve prices and rebidding decisions at $L_2$, influencing buyer allocations in subsequent $\tau_k$ steps.

Thus, a local change in buyer $B_4$'s bid on one seller ($L_1$) creates a cascading effect through the partial ordering structure, inducing market shifts and influencing allocation outcomes on another seller ($L_2$). The explicit recomputation of partial orderings demonstrates clearly how strategic perturbations propagate through interconnected auction markets.

\end{example}



\end{appendix}

\end{document}